\documentclass[runningheads]{llncs}
\usepackage{amsmath}
\usepackage{amssymb}
\usepackage{graphicx}
\usepackage[hidelinks]{hyperref}
\usepackage{float}
\usepackage{listings}
\usepackage[scaled]{beramono}
\usepackage[T1]{fontenc}
\usepackage{graphicx}
\newcommand{\Dashv}{\mathrel{\reflectbox{$\vDash$}}}

\newcommand{\rcontr}{\mathsf{C^\rightarrow}}
\newcommand{\lcontr}{\mathsf{C^\leftarrow}}
\newcommand{\rmove}{\mathsf{M^\rightarrow}}
\newcommand{\lmove}{\mathsf{M^\leftarrow}}
\newcommand{\reduces}[2]{\ensuremath{> \hspace{-0.5em} >_{#1,#2}}}

\newcommand{\leftimp}{\ensuremath{\rightarrowtail}}
\newcommand{\rightimp}{\ensuremath{\twoheadrightarrow}}
\newcommand{\nf}{\mathsf{nf}}
\newcommand{\OmegaNF}{\Omega_{\nf}}
\newcommand{\NF}{\mathsf{NF}}

\usepackage{stmaryrd}
\usepackage[square,numbers]{natbib}
\usepackage{lineno}

\usepackage{proof-dashed}
\newcommand{\ms}[1]{\mathsf{#1}}

\newcommand{\mb}[1]{\mathbf{#1}}

\newcommand{\fuse}{\bullet}
\newcommand{\lover}{\mathbin{/}}
\newcommand{\lunder}{\mathbin{\backslash}}
\newcommand{\twist}{\mathbin{\circ}}
\newcommand{\with}{\mathbin{\binampersand}}
\newcommand{\one}{\mb{1}}

\newcommand{\mmode}[1]{{\mathchoice{\ms{#1}}{\ms{#1}}{\scriptscriptstyle\ms{#1}}{\scriptscriptstyle\ms{#1}}}}
\newcommand{\mL}{\mmode{L}}
\newcommand{\mU}{\mmode{U}}

\newcommand{\up}{{\uparrow}}
\newcommand{\down}{{\downarrow}}

\newenvironment{rules}{\[\begin{array}{c}}{\end{array}\]\ignorespacesafterend}
\newenvironment{smallrules}{\begin{small}\[\begin{array}{c}}{\end{array}\]\end{small}\ignorespacesafterend}

\newcommand{\syn}{\Longrightarrow}

\newcommand{\chk}{\Longleftarrow}

\newcommand{\bbar}{\mathord{\Vert}}

\usepackage{xcolor}

\definecolor{keywordcolor}{rgb}{0.0, 0.1, 0.6}    
\definecolor{errorcolor}{rgb}{0.7, 0.1, 0.1}   
\definecolor{matchcolor}{rgb}{0.1, 0.5, 0.1}      
\definecolor{shiftcolor}{rgb}{0.7, 0.4, 0.7} 
\definecolor{requirecolor}{rgb}{0,0.4,0.13} 

\title{Ordered Adjoint Logic}
\author{Sophia Roshal\orcidID{0009-0001-8574-3705} \and
Frank Pfenning \orcidID{0000-0002-8279-5817}}
\authorrunning{S. Roshal and F. Pfenning}
\institute{Carnegie Mellon University}

\usepackage{enumitem}
\setlist{nosep,leftmargin=*}
\begin{document}
\maketitle
\begin{abstract}
  Ordered logics and type systems have been used in a variety of applications
  including computational linguistics, memory allocation, stream processing,
  logical frameworks, parametricity, and enforcing security protocols. In most
  formulations, ordered types are also linear, requiring each resource to be
  used exactly once.  Prior work by Kanovich et al. has investigated calculi
  that relax this constraint through subexponentials within a linear ordered
  logic.  We generalize their work by using adjoint modalities to combine logics
  with varying fine-grained structural properties, including weakening, left
  contraction, right contraction, left mobility, and right mobility.  We show
  that the resulting sequent calculus admits cut elimination.

  We further provide a natural deduction formulation in which structural rules
  are implicit, and show that proof checking for this system is decidable. This
  makes it a suitable foundation for an expressive adjoint programming language
  or logical framework.
\end{abstract}

\section{Introduction}
\label{sec:introduction}

Gentzen's original sequent calculus~\cite{Gentzen35} had explicit structural
rules for exchange, weakening, and contraction.  If we freely allow exchange,
and control the uses of weakening and contraction via an exponential modality,
we obtain either classical or intuitionistic linear logic~\cite{Girard87tcs}.
Prior to that, Lambek~\cite{Lambek58} had considered a sequent calculus
that disallows exchange, driven by applications in linguistics.  A combination of the ideas is natural, and has been
considered for the
classical~\cite{Yetter90jsl,Abrusci99apal,Ruet00mscs,Maieli03iandc} and
intuitionistic~\cite{Polakow99tlca,Polakow99mfps} cases.  The latter has found
applications in logical
frameworks~\cite{Polakow00lfm,Polakow01phd,Pfenning09lics,Simmons12phd,Pfenning23course},
stream processing~\cite{DeYoung16aplas,DeYoung20phd,Cutler24pldi}, and
semantic characterization of programs via
parametricity~\cite{Aberle25fscd}.

In this paper we focus on the intuitionistic variant, where we use the term
\emph{ordered} for logics lacking exchange.  In many applications, we
need to combine several forms of substructural reasoning, which has been
controlled with a modality for mobility $\mbox{\textexclamdown}A$ (effectively
making $A$ linear), and the usual ${!}A$ that also allowing weakening and
contraction~\cite{Polakow99mfps}.  This has been generalized by Kanovich et
al.~\cite{Kanovich18ijcar,Kanovich19mscs} using
\emph{subexponentials}~\cite{Danos93kgc,Nigam16jlc} to give an open-ended
calculus parameterized by a preorder of modes that may or may not satisfy the
structural properties of mobility, weakening, and contraction.

A drawback of subexponentials is that they are interpreted relative to a base
mode in which all the logical inferences take place.  For ordered logics, this
base mode must prohibit all structural rules.  As a result, applications often
require switching into and out of the base mode.  Benton's
LNL~\cite{Benton94csl} offers a solution by allowing linear and nonlinear
reasoning to be connected via two adjoint modalities.  This means we can reason
natively in ordinary or purely linear intuitionistic logic and employ modalities
only where it is necessary to include one in the other.  This has been
generalized to \emph{adjoint
  logic}~\cite{Reed09un,Pruiksma18un,Licata16lfcs,Pruiksma21jlamp,Pruiksma24phd},
essentially by applying Benton's decomposition of the exponential of linear
logic to subexponentials.  A related decomposition has been applied to graded
modal logic~\cite{Hanukaev23tyde,Vollmer25csl}, although not with explicit
control of exchange as far as we are aware.

In this paper we introduce ordered adjoint logic, which features a preorder of
modes, each of which may or may not satisfy mobility, weakening, or contraction.
Instead of a particular base mode (as with subexponentials), we use the
\emph{shift modalities} from adjoint logic to switch between reasoning at
different modes.  We further decompose mobility and contraction to be one-sided,
a possibility anticipated but not explored in the subexponential sequent
calculus of Kanovich et al.~\cite{Kanovich19mscs}.

We study ordered adjoint logic from two perspectives.  We start with a sequent
calculus and show that it satisfies cut and identity elimination, establishing
its bona fides as a well-behaved logical system.  We then introduce a system of
natural deduction for ordered adjoint logic that generalizes the adjoint types
of Jang et al.~\cite{Jang24fscd}, also controlling mobility.  The result of this
latter step is a natural deduction system that can easily be expanded to a type
system for a functional programming language in which one can postulate and
freely combine rather fine-grained information about the structural properties
of code. This extension to programming languages would follow a bidirectional type checking discipline with the set of terms equivalent to those in a non-ordered setting. We show that the proof-checking problem for this system is decidable
even when all applications of structural rules remain implicit.


The same structures recur for a substructural language based on a Kripke
semantics with explicit worlds called QRTT~\cite{Gouni26popl}, which has been applied to
represent and enforce various security properties via typing.  Consequently, our
result applies to ensure decidability of \emph{substructural subsumption}, which
is at the core of the type-checking algorithm for QRTT.

We present an ordered adjoint sequent calculus in \autoref{sec:ordered-seq}.  We
conclude this section with proofs of cut and identity elimination.  In
\autoref{sec:nd}, we introduce two natural deduction calculi: one with
explicit structural rules and one where they remain implicit.  We prove
decidability of proof checking for the latter.
\section{Ordered Adjoint Sequent Calculus}
\label{sec:ordered-seq}

In his seminal work on LNL, Benton~\cite{Benton94csl} connects purely linear
intuitionistic logic with ordinary intuitionistic logic with two adjoint
modalities: $F$ that includes unrestricted formulas in the linear ones, and $G$
that includes the linear formulas in the unrestricted ones.  The beauty of this
approach is that, under the Curry-Howard isomorphism, one can natively program
in either linear or nonlinear discipline, and switch between them only as
needed.  Moreover, we can recover the ${!}A$ as $F\, G\, A$.  This was later
generalized in the form of \emph{adjoint
  logic}~\cite{Reed09un,Pruiksma18un,Pruiksma24phd}, which is parameterized by a
preorder $k \geq m$ of modes, each of which satisfies a set of structural
properties $\sigma(m) \subseteq \{\ms{W}, \ms{C}\}$.  Here, $\ms{W}$
stands for \emph{weakening}, $\ms{C}$ for \emph{contraction}, while
\emph{exchange} is always
assumed.  Structural properties must be monotonic, that is, $k \geq m$ implies
$\sigma(k) \supseteq \sigma(m)$.  Propositions are indexed by modes $A_m$, and
$F$ and $G$ are generalized to \emph{shifts} $(\down^k_m A_k)_m$ and
$(\up_l^m A_l)_m$.  LNL is a special case with two modes $\mU > \mL$,
$\sigma(\mU) = \{\ms{W},\ms{C}\}$ and $\sigma(\mL) = \{\,\}$.  Furthermore,
$G\, A$ becomes $\up_\mL^\mU A_\mL$ and $F\, X$ becomes $\down^\mU_\mL X_\mU$.
We make the presupposition that in a sequent $\Gamma \vdash A_m$, each
antecedent $B_k$ in $\Gamma$ must satisfy $k \geq m$.  In particular,
$B_\mL \vdash A_\mU$, would be semantically unsound: via a cut we could derive
both weakening and contraction for $B_\mL$.  We abbreviate this
\emph{declaration of independence} as $\Gamma \geq m$.  Remarkably, all the
logical rules are parametric in their modes.  Their specific structural
properties only come into play when we have to
decide whether a structural rule can be applied to an antecedent.

In this section we further generalize adjoint logic to account for order among
antecedents.  
Everything explained above about adjoint logic still applies.

\subsection{Structural Rules in an Ordered Setting}
\label{sec:struct-rules}

In Gentzen's presentation of sequent calculus, reordering of antecedents is
governed by the exchange rule, which allows two adjacent elements 
to be swapped. 
\begin{smallrules}
\infer[\mathsf{Exchg}]
{\Omega_L\,A\, B\, \Omega_R \vdash C}
{\Omega_L\,B\, A\, \Omega_R \vdash C}
\end{smallrules}
Exchange acts on two propositions. In the setting of adjoint logic, however,
structural rules apply only when the mode of a proposition admits the
corresponding structural property. This raises the question, when should this swap be allowed?

An elegant solution is offered by Kanovich et
al.~\cite{Kanovich19mscs}: rather than trying to enforce
some reasonable condition on this rule they use \emph{mobility}. Instead of
swapping two propositions, mobility rules move a single proposition arbitrarily
far to the left or right within the antecedents. This formulation aligns
naturally with the unary presentation of weakening and contraction. Each
mobility rule is permitted precisely when the mode of the proposition admits the
corresponding directional mobility.  New here is that we treat left and right
mobility separately.

Since it is important to track specific occurrences of hypotheses in sequents, we
label each antecedent with a variable.  When new antecedents are introduced in
the premises of rules, we choose names not already present in the sequent.
Here, and subsequently, we use $\Omega$ to stand for the collection of
antecedents instead of $\Gamma$ in order to emphasize that their order is
significant.
\begin{smallrules}
  \infer[\mathsf{M}^\leftarrow]
  {\Omega_L\, (x{:}A_m)\, \Omega_M\, \Omega_R \vdash C_r}
  {\Omega_L\, \Omega_M\, (x{:}A_m)\, \Omega_R \vdash C_r
    & \mathsf{M}^\leftarrow \in \sigma(m)}
  \\[0.75em]
  \infer[\mathsf{M^\rightarrow}]
  {\Omega_L\, \Omega_M\, (x{:}A_m)\, \Omega_R \vdash C_r}
  {\Omega_L\, (x{:}A_m)\, \Omega_M\, \Omega_R \vdash C_r
    & \mathsf{M}^\rightarrow \in \sigma(m)}
\end{smallrules} 
If all antecedents have the same structural properties, left mobility implies
right mobility and vice versa.  However, immobile antecedents can block
exchange, so the two structural properties become independent.  This has
applications, for example, in modeling protocols and security properties (see
\autoref{sec:conclusion}).

As observed by Kanovich et al., if we restrict contraction to two adjacent
antecedents the resulting calculus does not satisfy cut elimination.
We therefore admit non-local contraction in both
directions.  As with mobility, we maintain two separate structural properties.
\begin{smallrules}
  \infer[\mathsf{C}^\leftarrow]
  {\Omega_L\, (x{:}A_m)\, \Omega_M\, \Omega_R \vdash C_r}
  {\Omega_L\, (x{:}A_m)\, \Omega_M\, (x{:}A_m)\, \Omega_R \vdash C_r
    & \mathsf{C}^\leftarrow \in \sigma(m)}
  \\[0.75em]
  \infer[\mathsf{C}^\rightarrow]
  {\Omega_L\, \Omega_M\, (x{:}A_m)\, \Omega_R \vdash C_r}
  {\Omega_L\, (x{:}A_m)\, \Omega_M\, (x{:}A_m)\, \Omega_R \vdash C_r
    & \mathsf{C}^\rightarrow \in \sigma(m)}
\end{smallrules}
Tracking contraction in this manner with duplicate variables will be important
in the proof of cut admissibility and in \autoref{sec:nd}.  While duplicate
variables are allowed, all occurrences of a variable must label the same
proposition, and variables introduced in the premises of rules of must be fresh.
Weakening remains a single rule:
\begin{smallrules}
  \infer[\mathsf{W}]
  {\Omega_L\, (x{:}A_m)\, \Omega_R \vdash C_r}
  {\Omega_L\Omega_R \vdash C_r & \mathsf{W} \in \sigma(m)}
\end{smallrules}

\subsection{Cut and Identity}

The rule of identity is unremarkable.  The rule of cut requires a condition on
modes so that our presupposition is preserved when reading the rule from the
conclusion to the premises.  We will shortly have occasion to generalize cut in
order to prove its admissibility in the cut-free sequent calculus.
\begin{smallrules}
  \infer[\ms{id}]
  {x{:} A_m \vdash A_m}
  {}
  \hspace{3em}
  \infer[\ms{cut}]
  {\Omega_L\, \Omega\, \Omega_R \vdash C_r}
  {\Omega \vdash A_m
    & \Omega_L\, (x{:} A_m)\, \Omega_R \vdash C_r
    & (\Omega \geq m \geq r)}
\end{smallrules}

\subsection{Logical Rules}

As compared to linear logic, implication splits into two connectives: left
implication $A_m \leftimp B_m$ (written $A \lunder B$ in the Lambek
calculus~\cite{Lambek58}) and right implication $A_m \rightimp B_m$ (usually
written $B \lover A$). It is also possible to split conjunction into fuse
($A_m \fuse B_m$) and twist ($A_m \twist B_m$).  The latter is usually omitted
since $A_m \fuse B_m$ is isomorphic to $B_m \twist A_m$.
We therefore obtain the following language of propositions, where $P_m$ stands
for atomic propositions. Note that in the first row, all non-atomic propositions have rules which are invertible on the right and therefore are negative, while in the second row they have rules which are invertible on the left and therefore are positive. 
\[
  \begin{array}{llcll}
    & A_m, B_m & ::=
    & P_m \mid A_m \leftimp B_m \mid  A_m \rightimp B_m \mid \with \{\ell : A_m^\ell\}_{\forall \ell \in L}
      \mid \up^m_l A_l & (m \geq l) \\ 
                 & & \mid
    & A_m \fuse B_m \mid \one_m \mid \oplus\{A_m^\ell\}_{\forall \ell \in L} 
      \mid \down^k_m A_k & (k \geq m) 
  \end{array}
\]
Note that $\oplus$ and $\with$ are generalized to be $n$-ary rather than binary. 
This means they could also be nullery or empty. In a sequent calculus, non-algorithmic presentation, 
this does not cause any difficulties. Later in the natural deduction section, we will discuss the difficulties of these empty types when developing an algorithm.
Remarkably, as for unordered adjoint logic, the left and right rules for the
connectives are entirely parametric in their mode.  We just have to be more
careful about maintaining order.  We highlight the rules for left and right
implication.  The complete set of rules can be found in \autoref{fig:ordered-seq}.
\begin{smallrules}
  \infer[\rightimp R]
  {\Omega \vdash A_m \rightimp B_m}
  {\Omega\, (x{:}A_m) \vdash B_m }
  \hspace{3em}
  \infer[\rightimp L]
  {\Omega_L\, (f{:}A_m \rightimp B_m)\, \Omega_A\, \Omega_R \vdash C_r}
  {\Omega_A \vdash A_m
    & \Omega_A \geq m & \Omega_L\, (y{:}B_m)\, \Omega_R \vdash C_r}
  \\[0.75em]
  \infer[\leftimp R]
  {\Omega \vdash A_m \leftimp B_m}
  {(x{:}A_m)\, \Omega \vdash B_m }
  \hspace{3em}
  \infer[\leftimp L]
  {\Omega_L\, \Omega_A\, (f{:}A_m \leftimp B_m)\, \Omega_R \vdash C_r}
  {\Omega_A \vdash A_m & \Omega_A \geq m 
    & \Omega_L\, (y{:}B_m)\, \Omega_R \vdash C_r }
\end{smallrules}%
\begin{figure}
    \centering
    \begin{smallrules}
    \infer[1R]{\cdot \vdash 1_m}{}
    \hspace{3em}
    \infer[1L]{\Omega_L\, (x{:}1_m)\, \Omega_R \vdash z{:} C_r}{\Omega_L\, \Omega_R \vdash C_r}
    \\[0.75em]
    \infer[\fuse R]{\Omega_1\, \Omega_2 \vdash A_m \fuse B_m}{\Omega_1 \vdash A_m & \Omega_2 \vdash B_m}
    \hspace{3em}
    \infer[\fuse L]
    {\Omega_L\, (x{:} A_m \fuse B_m)\, \Omega_R \vdash C_r}
    {\Omega_L\, (x_1{:}A_m)\, (x_2{:} B_m)\, \Omega_R \vdash C_r} 
    \\[0.75em]
    \infer[\oplus R]{\Omega \vdash \oplus \{\ell : A_m^\ell\}_{\forall \ell \in L}}{k \in L & \Omega \vdash A_m^k}\hspace{3em}
    \infer[\oplus L]
    {\Omega_L\, (x{:} \oplus \{\ell : A_m^\ell\}_{\forall \ell \in L})\, \Omega_R\vdash C_r}
    {\Omega_L\, (y{:} A_m^\ell)\, \Omega_R \vdash C_r
      \forall \ell \in L}
    \\[0.75em]

    \infer[\down R]
    {\Omega \vdash \down^k_m A_k}
    {\Omega \vdash A_k & (\Omega \geq k) }
    \hspace{3em}
    \infer[\down L]
    {\Omega_L\, (x{:} \down^k_m A_k)\, \Omega_R \vdash C_r}
    {\Omega_L\, (y{:} A_k)\, \Omega_R \vdash C_r}
    \\[0.75em]\hline\\[0.75em]
    \infer[\rightimp R]{\Omega \vdash A_m \rightimp B_m}{\Omega\, (x{:}A_m) \vdash B_m }
    \hspace{3em}
    \infer[\rightimp L]
    {\Omega_L\, (f{:}A_m \rightimp B_m)\, \Omega_A\, \Omega_R \vdash C_r}
    {\Omega_A \vdash A_m & \Omega_A \geq m & \Omega_L\, (y{:}B_m)\, \Omega_R  \vdash C_r}
    \\[0.75em]
    \infer[\leftimp R]{\Omega \vdash A_m \leftimp B_m}{(x{:}A_m)\, \Omega \vdash B_m}
    \hspace{3em}
    \infer[\leftimp L]
    {\Omega_L\, \Omega_A\, (f{:}A_m \leftimp B_m)\, \Omega_R \vdash C_r}
    {\Omega_A \vdash A_m & \Omega_A \geq m & \Omega_L\, (y{:}B_m)\, \Omega_R \vdash C_r }
    \\[0.75em]
    \infer[\with R]{\Omega \vdash A_m \with B_m}{\Omega \vdash A_m & \Omega \vdash B_m}\\[0.75em]
    \infer[\with L_1]
    {\Omega_L\, (x{:}A_m\with B_m)\, \Omega_R \vdash C_r}
    {\Omega_L\, (y{:} A_m)\, \Omega_R \vdash C_r}
    \hspace{3em}
    \infer[\with L_2]
    {\Omega_L\, (x{:}A_m\with B_m)\, \Omega_R \vdash C_r}
    {\Omega_L\, (y{:} B_m)\, \Omega_R \vdash C_r}
    \\[0.75em]
    \infer[\up R]{\Omega \vdash \up_l^m A_l}{\Omega \vdash A_l}
    \hspace{3em}
    \infer[\up L]
    {\Omega_L\, (x{:}\up_l^m A_l)\, \Omega_R \vdash C_r}
    {\Omega_L\, (y{:}A_l)\, \Omega_R \vdash C_r & (l \geq r) }
    \\[0.75em]\hline\\[0.75em]  
    
    \infer[\mathsf{id}]{x{:}A_m \vdash A_m}{}
    \hspace{3em}
    \infer[\mathsf{cut}]
    {\Omega_L\, \Omega_A\, \Omega_R \vdash C_r}
    {\Omega_A \vdash A_m & \Omega_L\, (x{:} A_m)\, \Omega_R \vdash C_r & (\Omega \geq m \geq r)}
  \end{smallrules}
    \caption{Ordered Adjoint Sequent Calculus (structural rules in \autoref{sec:struct-rules})}
    \label{fig:ordered-seq}
\end{figure}%
The rules regarding shifts are trickier, but the conditions on modes are
systematically derived from the need to preserve \emph{independence} when
reading the rules from the conclusion to the premises.

The rules for the downshift are:
\begin{smallrules}
  \infer[\down L]
  {\Omega_L\, (x{:}\down^k_m A_k)\, \Omega_R \vdash C_r}
  {\Omega_L\, (y{:}A_k)\, \Omega_R \vdash C_r}
  \hspace{3em}
  \infer[\down R]
  {\Omega \vdash \down^k_m A_k}
  {\Omega \vdash A_k & (\Omega \geq k)}
\end{smallrules}
In the left rule, the presuppositions $\Omega_L\,\Omega_R \geq r$ and
$k \geq m \geq r$ ensure that the premise is well formed. In the right rule,
$\Omega \geq m$ and $k \geq m$ do not imply $\Omega \geq k$, so this condition
must be checked explicitly.

The upshift rules are:
\begin{smallrules}
  \infer[\up L]
  {\Omega_L\, (x{:} \up^m_l A_l)\, \Omega_R \vdash C_r}
  {\Omega_L\, (y{:} A_l)\, \Omega_R \vdash C_r & (l \geq r)}
  \hspace{3em}
  \infer[\up R]
  {\Omega \vdash \up^m_l A_l}
  {\Omega \vdash A_l}
\end{smallrules}
For the left rule, the condition $l \geq r$ must be verified explicitly, since
it does not follow from $\Omega_L\, \Omega_R \geq r$, $m \geq l$ and $m \geq
r$. The right rule has no additional conditions on modes.
To show where unidirectional mobility makes
a difference, consider the following proof{:}

\begin{smallrules}
     \infer[\rightimp R]
  {\cdot \vdash \down^k_m A_k \rightimp B_m \rightimp B_m \fuse \down^k_m A_k}
  {\infer[\rightimp R]
    {x{:}\down^k_m A_k \vdash B_m \rightimp B_m \fuse \down^k_m A_k}
    {\infer[\down L]
      {(x{:}\down^k_m A_k)\, (y{:}B_m) \vdash B_m \fuse \down^k_m A_k}
      {\infer[\lmove]
        {(x' {:} A_k)\, (y{:}B_m) \vdash B_m \fuse \down^k_m A_k}
        {\infer[\fuse R]
          {(y{:}B_m)\, (x'{:}A_k) \vdash B_m \fuse \down^k_m A_k}
          {\infer[\ms{id}]{y{:}B_m \vdash B_m}{}
            & \infer[\down R]
            {x'{:}A_k \vdash \down^k_m A_k}
            {\infer[\ms{id}]{x'{:}A_k \vdash A_k}{}
              & (k \geq k)}}
          & \lmove \in \sigma(k)}}}}
\end{smallrules}
 
\noindent This proof  goes through if $k$ admits left mobility ($m$ need not admit any
mobility), or if $m$ admits any mobility (in which case we replace the $\lmove$
of $x'$ with an $\rmove$ of $y$). However, there is no proof if $m$ is immobile
and $k$ only admits right mobility.

\subsection{Cut Elimination}

Because we have explicit contraction rules, with multiple copies of the same
variable appearing in a context, we use a form of multicut rather than a
single cut \cite{Gentzen35,Girard87tcs,Pruiksma24phd}.  The number $n$ of
occurrences of a variable replaced in the antecedents must be compatible with
the structural properties of the mode $m$, written $|m| \sim n$.
Specifically{:}
\[
  \begin{array}{lcll}
    |m| & \sim & 0 & \mbox{if $\ms{W} \in \sigma(m)$} \\
    |m| & \sim & 1 & \mbox{always} \\
    |m| & \sim & n & \mbox{for $n > 1$ if $\lcontr \in \sigma(m)$ or $\rcontr \in \sigma(m)$}
  \end{array}
\]
In multicut, $\Omega(x{:}A_m)^n$ denotes $n$ occurrences of $x{:}A_m$
distributed throughout $\Omega$, and $\Omega(\Omega_A)^n$ denotes their
replacement by $\Omega_A$.  By our general assumptions, the variable $x$ must be
fresh in the premise.
\begin{smallrules}
  \infer[\mathsf{mcut}]
  {\Omega(\Omega_A)^n \vdash C_r}
  {\Omega_A \vdash A_m &
    \Omega\, (x{:}A_m)^n \vdash C_r
    & (\Omega_A \geq m \geq r)
    & |m| \sim n}
\end{smallrules}

\noindent We can now state and prove the admissibility of multicut in the
cut-free calculus, followed by general cut elimination.

\begin{theorem}[Admissibility of Multicut]
\label{thm:mcut-admit}
If $\Omega_A \vdash A_m$ and $\Omega(x{:}A_m)^n \vdash C_r$ with cut-free
derivations, then $\Omega(\Omega_A)^n \vdash C_r$ has a cut-free derivation.
\end{theorem}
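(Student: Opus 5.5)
The proof is by nested induction: first on the structure of the cut formula $A_m$, and then simultaneously on the two given cut-free derivations $\mathcal{D}$ of $\Omega_A \vdash A_m$ and $\mathcal{E}$ of $\Omega(x{:}A_m)^n \vdash C_r$. In the inner induction, either $\mathcal{D}$ or $\mathcal{E}$ gets smaller while the other stays fixed. Throughout, we keep track of the presuppositions $\Omega_A \geq m \geq r$ and $|m| \sim n$, and check that every appeal to the induction hypothesis satisfies them. This is where the mode conditions on the rules (for instance $l \geq r$ in $\up L$ and $\Omega \geq k$ in $\down R$) get used. The case analysis follows the usual pattern for adjoint logic.

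\textbf{Axiom and structural cases on $\mathcal{E}$.} If $n = 0$, then $\mathsf{W} \in \sigma(m)$, so $\mathsf{W} \in \sigma(k)$ for every antecedent of $\Omega_A$ by monotonicity, since $\Omega_A \geq m$. We therefore weaken away $\Omega_A$, using $\mathcal{E}$ with $x$ absent. If $\mathcal{D}$ ends in $\ms{id}$, the result is just $\mathcal{E}$ with the variable renamed. If $\mathcal{E}$ ends in $\ms{id}$ on $x$, the result is $\mathcal{D}$.

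If $\mathcal{E}$ ends in a structural rule on a variable other than $x$, we apply the induction hypothesis to the premise and reapply the rule. If $\mathcal{E}$ ends in a structural rule on $x$ itself, there are three subcases.
\begin{itemize}
\item A contraction $\lcontr$ or $\rcontr$ raises $n$ to $n+1$ in the premise. This is still compatible with $|m|$, and the induction hypothesis places two copies of $\Omega_A$. We then need to contract all of $\Omega_A$ in the same direction. This is admissible because each antecedent $y{:}B_k$ of $\Omega_A$ has $k \geq m$, so $\sigma(k) \supseteq \sigma(m)$. I would state this as an auxiliary lemma: contraction and mobility of a whole ordered block $\Omega_A$ of a given direction are admissible whenever every element's mode has that property. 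The lemma is proved by contracting or moving the block's elements one at a time in a suitable order: rightmost first for right moves, leftmost first for left moves.
\item Mobility of $x$ is handled the same way, by moving the whole block $\Omega_A$.
\item Weakening of $x$ reduces $n$ by one and is handled like the $n=0$ case.
\end{itemize}

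\textbf{Commutative and principal cases.}
\begin{itemize}
\item \emph{Left commutative cases.} If $\mathcal{D}$ ends in a left rule or a structural rule, we push the cut up into $\mathcal{D}$. Here it matters that multicut replaces each of the $n$ occurrences by the same block $\Omega_A$. So if $\mathcal{D}$ ends in, say, $\fuse L$ or a structural rule on some antecedent, each copy must be processed identically. I would handle this by first generalizing the statement so that the replacement is done uniformly, then applying the rule once per copy after the induction hypothesis. For $\fuse L$ and similar rules this needs the rule to be invertible, or else the copies must share fresh variables; the paper's convention that duplicate variables label the same proposition suggests exactly this. I expect this multi-copy bookkeeping to be the main obstacle.
\item \emph{Right commutative cases.} If $\mathcal{E}$ ends in a rule whose principal formula is not $x$, we apply the induction hypothesis to the premises and reapply the rule. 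For two-premise rules such as $\rightimp L$ and $\fuse R$, the occurrences of $x$ distribute among the premises, and each count must satisfy $|m| \sim n_i$. When a premise receives $n_i = 0$ while $\mathsf{W} \notin \sigma(m)$, we simply do not cut into that premise. When one premise receives $n_i > 1$, this is still permitted by contraction. We then check that the mode side conditions, for example $\Omega_A \geq m$ in $\rightimp L$, still hold after substitution; they do, since $\Omega_A \geq m$ and $m \geq$ the relevant mode.
\item \emph{Principal cases.} Here $\mathcal{D}$ ends in the right rule for $A_m$ and one occurrence of $x$ in $\mathcal{E}$ is principal. We first apply the inner induction hypothesis, with $\mathcal{D}$ fixed and the smaller premise of $\mathcal{E}$, to eliminate the other $n-1$ occurrences; this leaves one occurrence. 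We then cut at the smaller subformulas using the outer induction hypothesis with $n=1$.
\end{itemize}

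The principal case for $\up^m_l$ uses $l \geq r$ from $\up L$ to make the cut at $A_l$ well moded. The case for $\down^k_m$ uses $\Omega_A \geq k$ from $\down R$. For $\rightimp$ and $\leftimp$ we cut twice, first at $A_m$ and then at $B_m$, and the order in the conclusion matches because of the placement of $\Omega_A$ in the left rules.
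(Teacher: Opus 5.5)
\paragraph{Assessment.} For the cases the paper actually displays ($\mathsf{W}$, contraction and mobility on an occurrence of $x$), you argue exactly as it does. You apply the induction hypothesis to $\mathcal{E}'$, get the structural property for $\Omega_A$ by monotonicity from $\Omega_A \geq m$, and apply the structural rule to each element of the copy of $\Omega_A$. Your principal and right-commutative cases are also fine.

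\paragraph{The gap.} The problem is the case you flag yourself: $\mathcal{D}$ ends in a \emph{logical} left rule while $n>1$. Your plan is to apply the induction hypothesis to the premise of $\mathcal{D}$ and then reapply the left rule once per copy. That does not go through, and the remedies you suggest do not address the issue.
\begin{itemize}
\item Invertibility is beside the point, because you need the left rule in its forward direction.
\item Letting the $n$ applications share the new names contradicts the requirement that premises introduce names not already present in the sequent. This naming part alone could be repaired by an admissible renaming-apart lemma.
\item The decisive failure is $\oplus L$ with $|L|\geq 2$, which is invertible yet still breaks the plan. Decomposing the $n$ copies of the principal antecedent one after another leaves premises for all $|L|^n$ combinations of branches. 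The induction hypothesis applied to each premise $\mathcal{D}_\ell$ yields only the diagonal combinations, where every copy took the same $\ell$. No naming convention repairs this.
\end{itemize}

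\paragraph{How to close it.} Never push the cut into $\mathcal{D}$ while more than one copy of $x$ remains. This needs the paper's lexicographic measure $(A_m,\mathcal{D},\mathcal{E})$ rather than your ``one shrinks, the other stays fixed'' measure.
\begin{itemize}
\item Whenever $\mathcal{E}$ does not end in a logical left rule on an occurrence of $x$, reduce on $\mathcal{E}$ as you already do.
\item If $\mathcal{E}$ ends in a left rule on one occurrence of $x$, first do what you already do in the principal case. Use the induction hypothesis on the premises of $\mathcal{E}$ to substitute $\Omega_A$ for the other $n-1$ occurrences. Then reapply that left rule, obtaining a new derivation $\mathcal{E}^{\ast}$ with a single occurrence of $x$.
\item If $\mathcal{D}$ now ends in a left rule with premise(s) $\mathcal{D}'$, the induction hypothesis applies to $(A_m,\mathcal{D}',\mathcal{E}^{\ast})$ with $n=1$, because $\mathcal{D}'$ is smaller. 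This holds even though $\mathcal{E}^{\ast}$ is not a subderivation of $\mathcal{E}$, which is exactly what your measure forbids.
\item A single application of the left rule finishes, with fresh names and one branch per $\ell$ in the case of $\oplus L$.
\end{itemize}
Alternatively, you could generalize multicut so that each occurrence of $x$ is cut against its own derivation of $A_m$. A duplicate created by contraction inherits the derivation of the occurrence it copies, and the left derivations are measured as a multiset.

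\paragraph{Minor point.} For $n=0$ the conclusion $\Omega(\Omega_A)^0$ is just $\Omega$, so $\mathcal{E}$ itself is the result. Weakening in a copy of $\Omega_A$ is needed only when $\mathcal{E}$ ends in $\mathsf{W}$ on $x$.
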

\begin{proof}
  We proceed by induction on the ordered triple $(A_m,\mathcal{D}, \mathcal{E})$
  We show the principal cases for weakening, and one direction of contraction
  and mobility.  We use dashed lines to indicate admissibility.
\begin{description}
\item[Case:] $\mathcal{E}$ ends in $\mathsf{W}$ on one occurrence of the principal formula.
\begin{smallrules}
  \infer-[\mathsf{mcut}]
  {(\Omega_L\, \Omega_A\, \Omega_R)(\Omega_A)^{n-1} \vdash C_r}
  {\deduce{\Omega_A \vdash A_m}{\mathcal{D}}
    & \infer[\mathsf{W}]
    {(\Omega_L\, (x{:}A_m)\, \Omega_R)(x{:}A_m)^{n-1} \vdash C_r}
    {\deduce{(\Omega_L\, \Omega_R)(x{:}A_m)^{n-1} \vdash C_r}{\mathcal{E}'}
      & \deduce{\mathsf{W} \in \sigma(m)}{(2)}}
    & \deduce{\Omega_A \geq m \geq r}{(1)}}
\end{smallrules}
\begin{tabbing}
    $(\Omega_L\, \Omega_R)(\Omega_A)^{n-1} \vdash C_r$ \` by IH on $(A_m,\mathcal{D},\mathcal{E}')$, with (1)\\
    $\Omega_A$ admits weakening \` by monotonicity ((2),(1))\\
    $(\Omega_L\, \Omega_A\, \Omega_R)(\Omega_A)^{n-1} \vdash C_r$ \` by $\mathsf{W}$ on elements of $\Omega_A$
\end{tabbing}

\item[Case:] $\mathcal{E}$ ends in $\lcontr$ on (one copy of) principal formula.
\begin{smallrules}
  \infer-[\mathsf{mcut}]
  {(\Omega_L\, \Omega_A\, \Omega_M\, \Omega_R)(\Omega_A)^{n-1} \vdash C_r}
  {\deduce{\Omega_A \vdash A_m}{\mathcal{D}}
    & \infer[\lcontr]
    {(\Omega_L\, (x{:}A_m)\, \Omega_M\, \Omega_R)(x{:}A_m)^{n-1} \vdash C_r}
    {\deduce{(\Omega_L\, (x{:}A_m)\, \Omega_M\, (x{:}A_m)\, \Omega_R)(x{:}A_m)^{n-1} \vdash C_r}{\mathcal{E}'}
      & (2)}
    & (1)}
\end{smallrules}
With 
\[
   (1) = \Omega_A \geq m \geq r\ \ \  (2) = \lcontr \in \sigma(m)
\]

\begin{tabbing}
  $(\Omega_L\,\Omega_A\,\Omega_M\,\Omega_A\,\Omega_R)(\Omega_A)^{n-1} \vdash C_r$ \`
  by IH on $(A_m,\mathcal{D},\mathcal{E}')$, with (1)\\
  $\Omega_A$ admits left contraction \` by monotonicity $((2),(1))$\\
  $(\Omega_L\, \Omega_A\, \Omega_M\, \Omega_R)(\Omega_A)^{n-1} \vdash C_r$ \` by $\lcontr$ on elements of $\Omega_A$
\end{tabbing}
\item[Case: $\mathcal{E}$ ends in $\lmove$ on (one copy of) principal formula]
\begin{smallrules}
  \infer-[\mathsf{mcut}]
  {(\Omega_L\, \Omega_A\, \Omega_M\, \Omega_R)(\Omega_A)^{n-1} \vdash C_r}
  {\deduce{\Omega_A \vdash A_m}{\mathcal{D}}
    & \infer[\lmove]
    {(\Omega_L\, (x{:}A_m)\, \Omega_M\, \Omega_R)(x{:}A_m)^{n-1} \vdash C_r}
    {\deduce{(\Omega_L\, \Omega_M\, (x{:}A_m)\, \Omega_R)(x{:}A_m)^{n-1} \vdash C_r}{\mathcal{E}'} & (2)}
    & (1)}
\end{smallrules}
With 
\[(1) = \Omega_A \geq m \geq r\]
\[(2) = \lmove \in \sigma(m)\] 
\begin{tabbing}
  $(\Omega_L\, \Omega_M\, \Omega_A\, \Omega_R)(\Omega_A)^{n-1} \vdash C_r$ \` by IH on $(A_m,\mathcal{D},\mathcal{E}')$, with (1)\\
  $\Omega_A$ admits left mobility \` by monotonicity $((2),(1))$\\
  $(\Omega_L\, \Omega_A\, \Omega_M\, \Omega_R)(\Omega_A)^{n-1} \vdash C_r$ \` by $\lmove$ on elements of $\Omega_A$, with previous
\end{tabbing}
\end{description}   
\end{proof}

\begin{theorem}[Cut Elimination]
  \label{thm:cut-elim}
  If $\Omega \vdash C_r$ then there is a cut-free proof of this sequent.
\end{theorem}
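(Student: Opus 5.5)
We prove the statement by induction on the structure of the given derivation of $\Omega \vdash C_r$, which may use $\ms{cut}$ anywhere. The plan is to eliminate cuts from the top down, removing one topmost cut at a time and appealing to Theorem~\ref{thm:mcut-admit} at each step.

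If the last rule is not $\ms{cut}$, we apply the induction hypothesis to every premise and obtain cut-free derivations of them. This covers all logical, structural, and identity rules. We then reapply the same rule. The rule stays applicable because its side conditions ($\Omega_A \geq m$, $l \geq r$, $\Omega \geq k$, membership in $\sigma(m)$) concern only modes and contexts, and these do not change when the premise derivations are replaced.

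If the last rule is
\[
\infer[\ms{cut}]{\Omega_L\,\Omega_A\,\Omega_R \vdash C_r}{\Omega_A \vdash A_m & \Omega_L\,(x{:}A_m)\,\Omega_R \vdash C_r & (\Omega_A \geq m \geq r)}
\]
we apply the induction hypothesis to both premises to get cut-free $\mathcal{D}$ and $\mathcal{E}$. Cut is the special case $n = 1$ of $\ms{mcut}$, and $|m| \sim 1$ always holds. Choosing $x$ fresh, the right premise has the form $\Omega(x{:}A_m)^1 \vdash C_r$ with $\Omega = \Omega_L\,\Box\,\Omega_R$, so $\Omega(\Omega_A)^1$ is exactly $\Omega_L\,\Omega_A\,\Omega_R$. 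Theorem~\ref{thm:mcut-admit} then gives a cut-free derivation of the conclusion.

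The only delicate point is freshness. In the original cut the label $x$ might, in principle, clash with other occurrences in $\Omega_L\,\Omega_R$. Since variables introduced in premises are chosen fresh, we can rename $x$ throughout $\mathcal{E}$ before invoking multicut, so that exactly the one intended occurrence is replaced. Apart from this bookkeeping, the argument is a routine structural induction, and all the real work has already been done in Theorem~\ref{thm:mcut-admit}.
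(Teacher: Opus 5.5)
Your proposal is correct and follows the paper's proof: structural induction on the derivation, reapplying each non-cut rule to the cut-free premises, and discharging the cut case with Theorem~\ref{thm:mcut-admit} at $n = 1$ (where $|m| \sim 1$ always holds). The renaming step you add is harmless but unnecessary, because the paper's convention already requires the variable $x$ introduced in the premise of $\ms{cut}$ to be fresh.
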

\begin{proof}
  By straightforward structural induction on the given derivation.  For the case
  of cut, we apply admissibility of multicut to the result of eliminating cut
  from the derivations of the premises.
\end{proof}

As a second test for the system, the identity rule should be admissible except
for atoms.  As usual, this is straightforward in a sequent calculus.

\begin{theorem}[Admissibility of Identity]
  \label{thm:id-admit}
  The rule of identity is admissible in a system where it is restricted to
  atomic propositions.
\end{theorem}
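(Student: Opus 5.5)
We show that $x{:}A_m \vdash A_m$ has a derivation using identity only at atoms, by induction on the structure of $A_m$. The key point is that every logical rule in \autoref{fig:ordered-seq} is parametric in the mode. So for each connective we can apply its right rule and its left rule, in whichever order is appropriate, to reduce to identities on the immediate subformulas. We then close these by the induction hypothesis. No structural rules are needed, because the decomposition keeps each antecedent exactly where it must be. The mode side conditions only need to be checked where the rules impose them explicitly.

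\textbf{Negative connectives (apply the right rule first).}
\begin{itemize}
\item $A_m \rightimp B_m$: apply $\rightimp R$ to get $(f{:}A_m\rightimp B_m)\,(x{:}A_m) \vdash B_m$. Then apply $\rightimp L$ with $\Omega_L = \Omega_R = \cdot$ and $\Omega_A = (x{:}A_m)$. This leaves $x{:}A_m \vdash A_m$ and $y{:}B_m \vdash B_m$, both given by the induction hypothesis. The condition $\Omega_A \geq m$ holds trivially since $m \geq m$.
\item $A_m \leftimp B_m$: the same argument applies, with the new antecedent placed on the left.
\item $\with$: apply $\with R$, then in each branch the corresponding $\with L$ together with the induction hypothesis.
\item $\up^m_l A_l$: apply $\up R$ to get $x{:}\up^m_l A_l \vdash A_l$. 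Then apply $\up L$, whose side condition $l \geq l$ holds, and finish with the induction hypothesis on $A_l$. This ordering is forced. Applying $\up L$ first would require $l \geq m$, which may fail, and would also violate independence.
\end{itemize}

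\textbf{Positive connectives (apply the left rule first).}
\begin{itemize}
\item $A_m \fuse B_m$: apply $\fuse L$ to get $(x_1{:}A_m)\,(x_2{:}B_m) \vdash A_m\fuse B_m$. Then apply $\fuse R$, splitting the context in order, and close both premises by the induction hypothesis.
\item $\one_m$: apply $\one L$, then $\one R$.
\item $\oplus$: apply $\oplus L$ to get one premise per label $\ell$. Each is closed by $\oplus R$ choosing $\ell$, followed by the induction hypothesis. The nullary case is immediate, since $\oplus L$ then has no premises.
\item $\down^k_m A_k$: apply $\down L$ to get $y{:}A_k \vdash \down^k_m A_k$. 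Then apply $\down R$, whose side condition $(y{:}A_k) \geq k$ holds. Again the ordering matters: $\down R$ applied first would require $k$-mode antecedents, which $x{:}\down^k_m A_k$ is not.
\end{itemize}

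\textbf{Main obstacle.} The only step needing care is the shift cases. There the order of rule applications must be chosen so that the explicit side conditions ($l \geq r$ in $\up L$, $\Omega \geq k$ in $\down R$) reduce to reflexivity of the preorder, and so that the presupposition $\Omega \geq m$ holds in every intermediate sequent. Once the orders above are fixed, all remaining checks are routine. The induction is well founded because every premise mentions only immediate subformulas.
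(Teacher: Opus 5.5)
Your proof is correct and takes the same route as the paper, whose proof is the one-line statement that it goes by induction on the structure of $A_m$. Your case analysis (right rule first for the negative connectives and $\up$, left rule first for the positive connectives and $\down$, so that the shift side conditions reduce to reflexivity) supplies the details the paper leaves implicit; the base case $P_m$, which you do not state explicitly, is just the restricted identity rule itself.
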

\begin{proof}
  By induction over the structure of the proposition $A_m$.
\end{proof}

\section{Natural Deduction}
\label{sec:nd}
We now move towards a presentation of a natural deduction system in which
structural rules are implicit. We do so in a natural deduction context rather
than staying within the sequent calculus as the final goal of this work is to
have a type checker for an extension of the adjoint natural deduction system in
\cite{Jang24fscd} that considers order. However, for the initial presentation
we elide proof terms and show the decidability of natural deduction proof
checking in the absence of explicit structural rules instead.  The final
step to bidirectional type-checking is then rather straightforward and
is patterned after \cite{Jang24fscd}. 
A short description of this process can be found in \autoref{sec:types}.

It is perhaps somewhat unexpected that proof checking with implicit structural
rules is not easily seen to be decidable, as is the case in unordered
adjoint natural deduction. The culprits are the interactions of unidirectional
mobility, weakening, and the nondeterminism in locating the new antecedents in
the positive elimination rules.  The corresponding problem in the sequent
calculus is that in the cut rule, the new antecedent $x{:} A_m$ can be anywhere
in the context, a property that is required for cut elimination.

We begin with a non-algorithmic system in \autoref{sec:nd-explicit}, which
still keeps structural rules explicit to build a bridge to the sequent calculus.
In \autoref{sec:nd-implicit} we present a terminating algorithm to check
proofs with implicit structural rules.  Optimizations beyond the scope of this
paper are still necessary to make this algorithm practical and are briefly
discussed in the conclusion.

\subsection{Explicit Structural Rules}
\label{sec:nd-explicit}

We begin with the judgment 
\[
  \Gamma \vdash A_m \dashv \Omega
\]
where $\Gamma$ is an unordered set of all hypotheses $y{:} B_k$ that are
syntactically in scope while proving $A_m$. $\Omega$ is the context of the
ones that are actually used, in order. 

One consideration with such a system comes from the nullary constructors $\oplus_m\{\}$ and $\with\{\}$. 
The right rule for the nullary $\with_m\{\}$ allows any of the input context to be considered as used as long 
as it is at the appropriate mode. 
To address this \cite{Jang24fscd} used provisional bindings to indicate in the output 
that certain propositions could be considered used but don't have to be. 
This solution should extend in a straightforward manner to the ordered setting, and so we consider it orthogonal to the main contributions of this paper. 
For this reason, in this presentation, we consider only the binary $\oplus$ and $\with$ to avoid the extra complication of provisional bindings.

As usual in natural deduction, rules are divided into introduction and elimination
rules.  We consider $\fuse I$ as a first sample rule.
\begin{smallrules}
  \infer[\fuse I]
  {\Gamma \vdash A_m \fuse B_m \dashv \Omega_A\, \Omega_B}
  {\Gamma \vdash A_m \dashv \Omega_A & \Gamma \vdash B_m \dashv \Omega_B}
\end{smallrules}
$\Omega_A$ and $\Omega_B$ are the ordered lists of labeled hypothesis used in
the proofs of $A_m$ and $B_m$, respectively.  They must be concatenated to form
the hypotheses used to prove the pair.

In the presence of contraction, the same labeled hypothesis
$y{:} B_k$ may be present in both $\Omega_A$ and $\Omega_B$.  In unordered
adjoint (and other substructural logics), they can be directly merged, but
here we require an explicit application of a contraction rule, if permissible.
\begin{smallrules}
  \infer[\rcontr]
  {\Gamma \vdash  C_r \dashv \Omega_L\, \Omega_M\, (x{:}A_m)\, \Omega_R}
  {\Gamma \vdash  C_r \dashv \Omega_L\, (x{:}A_m)\, \Omega_M\, (x{:}A_m)\, \Omega_R
    & \mathsf{C}^{\rightarrow} \in \sigma(m)}
  \\[0.75em]
  \infer[\lcontr]
  {\Gamma \vdash  C_r \dashv \Omega_L\, (x{:}A_m)\, \Omega_M\, \Omega_R}
  {\Gamma \vdash  C_r \dashv \Omega_L\, (x{:}A_m)\, \Omega_M\, (x{:}A_m)\, \Omega_R
    & \mathsf{C}^{\leftarrow} \in \sigma(m)}
\end{smallrules}
The $\fuse E$ rule is one source of considerable nondeterminism.  We write
$|\Omega|$ for the set of variables declared in $\Omega$.
\begin{smallrules}
  \infer[\fuse E]
  {\Gamma \vdash C_r \dashv \Omega_L\, \Omega\, \Omega_R}
  {\Gamma \vdash A_m \fuse B_m \dashv \Omega
    & (m \geq r)
    & \begin{array}[b]{r}
        (x, y \not\in |\Omega_L\, \Omega_R|) \\
        \Gamma, x {:} A_m, y {:} B_m \vdash C_r \dashv \Omega_L\, (x {:} A_m)\, (y {:} B_m)\, \Omega_R
      \end{array}}
\end{smallrules}

\noindent One difficulty is that due to unidirectional mobility and also due to weakening,
the location for $x$ and $y$ in the output context of the second premise may not
be uniquely determined.  Furthermore, $x$ and $y$ go out of scope, so they are
not allowed to occur in $\Omega_L$ and $\Omega_R$ (nor in $\Omega$ by our
general freshness assumption).

We also have rules where two output contexts must match. $\with I$ is shown:
\begin{smallrules}
  \infer[\with I]
  {\Gamma \vdash A_m \with B_m \dashv\Omega}
  {\Gamma \vdash  A_m \dashv \Omega & \Gamma \vdash B_m \dashv \Omega}
\end{smallrules}
This might require that some weakening occur before application of this
rule in the first and second premise (assuming the mode $m$ permits it).

The structural rules for weakening and mobility on the output context are:
\begin{smallrules}
  \infer[\ms{W}]
  {\Gamma \vdash C_r \dashv \Omega_L(x{:}A_m)\Omega_R}
  {\Gamma \vdash  C_r \dashv \Omega_L\Omega_R & x{:}A_m \in \Gamma & (m \geq r) & \mathsf{W} \in \sigma(m)}
  \\[0.75em]
  \infer[\rmove]
  {\Gamma \vdash  C_r \dashv \Omega_L\, \Omega_M\, (x{:}A_m)\, \Omega_R}
  {\Gamma \vdash  C_r \dashv \Omega_L\, (x{:}A_m)\, \Omega_M\, \Omega_R
    & \mathsf{M}^{\rightarrow} \in \sigma(m)}
  \\[0.75em]
  \infer[\lmove]
  {\Gamma \vdash C_r \dashv \Omega_L\, (x{:}A_m)\, \Omega_M\, \Omega_R}
  {\Gamma \vdash C_r \dashv \Omega_L\, \Omega_M\, (x{:}A_m)\, \Omega_R
    & \mathsf{M}^{\leftarrow} \in \sigma(m)}       
\end{smallrules}
The remaining logical rules can be found in \autoref{fig:ordnd}.
\begin{figure}
  \begin{smallrules}
    \infer[1I]{\Gamma \vdash 1_m \dashv \cdot}{} \hspace{3em}
    \infer[1E]{\Gamma \vdash C_r \dashv \Omega_L\, \Omega_M\, \Omega_R}
    {\Gamma \vdash1_m\dashv {\Omega_M} & (m \geq r)
      & \Gamma \vdash C_r \dashv {\Omega_L\, \Omega_R}}
    \\[0.75em]
    \infer[\fuse I]{\Gamma \vdash A_m \fuse B_m \dashv \Omega_L\Omega_R}
    {\Gamma \vdash A_m \dashv {\Omega_L} & \Gamma \vdash B_m \dashv {\Omega_R}}
    \\[0.75em]
    \infer[\fuse E]
    {\Gamma \vdash C_r\dashv {\Omega_L\, \Omega_M\, \Omega_R}}
    {\Gamma \vdash A_m \fuse B_m\dashv {\Omega_M}
      & (m \geq r)
      & \begin{array}[b]{r}
          (x,y \notin \Omega_L\, \Omega_R) \\
          \Gamma, x{:}A_m, y{:}B_m \vdash C_r \dashv {\Omega_L\, (x{:}A_m)\, (y{:}B_m)\, \Omega_R}
        \end{array}}
    \\[0.75em]
        
    \infer[\oplus I_1]
    {\Gamma \vdash A_m \oplus B_m\dashv {\Omega}}
    {\Gamma \vdash A_m\dashv {\Omega}}
    \hspace{3em}
    \infer[\oplus I_2]
    {\Gamma \vdash A_m \oplus B_m\dashv \Omega}
    {\Gamma \vdash B_m\dashv {\Omega}}
    \\[0.75em]
    \infer[\oplus E]
    {\Gamma \vdash C_r\dashv \Omega_L\, \Omega_M\, \Omega_R}
    {\Gamma \vdash A_m \oplus B_m \dashv {\Omega_M}
      & (m \geq r)
      & \begin{array}[b]{r}
          x \notin |\Omega_L\,\Omega_R| \\
          \Gamma, x{:}A_m \vdash C_r \dashv {\Omega_L\, (x{:}A_m)\, \Omega_R} \\
          \Gamma, x{:}B_m \vdash C_r \dashv {\Omega_L\, (x{:}B_m)\, \Omega_R}
        \end{array}}
  \\[0.75em]
  \infer[\down I]
  {\Gamma \vdash \down^k_m A_k \dashv {\Omega}}
  {\Gamma \vdash A_k \dashv {\Omega} }
  \\[0.75em]
  \infer[\down E]
  {\Gamma \vdash C_r \dashv \Omega_L\, \Omega_M\, \Omega_R}
  {\Gamma \vdash \down^k_m A_k \dashv {\Omega_M}
    & (m \geq r)
    & \Gamma, x{:}A_k \vdash C_r \dashv {\Omega_L\, (x{:}A_k)\, \Omega_R}
    & x \notin |\Omega_L\,\Omega_R|}
  \\[0.75em]\hline\\[0.75em]
  \infer[\rightimp I]
  {\Gamma \vdash A_m \rightimp B_m\dashv \Omega}
  {\Gamma,x{:}A_m \vdash B_m \dashv {\Omega\, (x{:}A_m)}
    & x \notin \Omega}
  \hspace{1em}
  \infer[\rightimp E]
  {\Gamma \vdash  B_m\dashv \Omega_L\Omega_R}
  {\Gamma \vdash A_m \rightimp B_m\dashv {\Omega_L}
    & \Gamma \vdash A_m\dashv {\Omega_R}}
  \\[0.75em]
  \infer[\leftimp I]
  {\Gamma \vdash A_m \leftimp B_m \dashv \Omega}
  {\Gamma, x{:}A_m \vdash B_m \dashv {(x{:}A_m)\, \Omega}
    & x \notin \Omega}
  \hspace{1em}
  \infer[\leftimp E]
  {\Gamma \vdash B_m \dashv {\Omega_L\, \Omega_R}}
  {\Gamma \vdash A_m \leftimp B_m \dashv {\Omega_R}
    & \Gamma \vdash A_m \dashv {\Omega_L}}
  \\[0.75em]
  \infer[\with I]
  {\Gamma \vdash A_m \with B_m \dashv {\Omega}}
  {\Gamma \vdash A_m \dashv {\Omega}
    & \Gamma \vdash B_m \dashv {\Omega}}
  \hspace{2em}
  \infer[\with E_1]
  {\Gamma \vdash A_m \dashv {\Omega}}
  {\Gamma \vdash A_m \with B_m \dashv {\Omega}}
  \hspace{1em}
  \infer[\with E_2]
  {\Gamma \vdash B_m \dashv {\Omega}}
  {\Gamma \vdash  A_m \with B_m \dashv {\Omega}}
  \\[0.75em]
  \infer[\up I]
  {\Gamma \vdash \up_l^m A_l \dashv {\Omega}}
  {\Gamma \vdash A_l \dashv {\Omega}
    & \Omega \geq m }
  \hspace{3em}
  \infer[\up E]
  {\Gamma \vdash A_k \dashv {\Omega}}
  {\Gamma \vdash \up_k^l A_k \dashv {\Omega}}
  \\[0.75em]\hline\\[0.75em]
  \infer[\mathsf{hyp}]
  {\Gamma \vdash A_m \dashv (x{:}A_m)}
  {x{:}A_m \in \Gamma}
\end{smallrules}
\caption{Ordered Natural Deduction (explicit structural rules in \autoref{sec:nd-explicit})}
\label{fig:ordnd}
\end{figure}

We relate this system to the sequent calculus defined in
\autoref{sec:ordered-seq}.  There are more fine-grained translations (say,
relating cut-free sequent derivations to normal natural deduction; see
\cite{Jang24fscd} for the unordered adjoint case), but this is not relevant for
our purposes here.
\begin{lemma}[Substitution]
  \begin{enumerate}
  \item If $\Gamma \vdash A_m \dashv \Omega$ then $\Gamma \supseteq \Omega$ and $\Omega \geq m$
  \item If $\Gamma \vdash A_m \dashv \Omega$ and $\Gamma' \supseteq \Gamma$ then $\Gamma' \vdash A_m \dashv \Omega$
  \item If $\Gamma_1 \vdash A_m \dashv \Omega_A$ and $\Gamma_2, x{:} A_m \vdash C_r \dashv \Omega(x{:} A_m)^n$ \newline
    then $\Gamma_1 \cup \Gamma_2 \vdash C_r \dashv \Omega(\Omega_A)^n$ provided $|m| \sim n$.
  \end{enumerate}
\end{lemma}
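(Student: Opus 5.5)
All three parts go by induction on the given derivation of the judgment $\Gamma \vdash A_m \dashv \Omega$ (in part 3, on the second derivation). Each part is used in the later ones.

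\emph{Part 1.} We induct on the derivation of $\Gamma \vdash A_m \dashv \Omega$.

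\begin{itemize}
\item $\mathsf{hyp}$ gives $\Omega = (x{:}A_m)$, and $x{:}A_m \in \Gamma$ holds by the premise.
\item $\ms{W}$ checks $x{:}A_m \in \Gamma$ and $m \geq r$ explicitly.
\item The contraction and mobility rules only rearrange or duplicate entries, so they preserve both properties.
\item For the elimination rules of positive connectives ($\fuse E$, $\oplus E$, $\down E$, $1E$), the second premise yields $\Omega_L\,\Omega_R \geq r$ and inclusion in $\Gamma, x{:}A_m,\ldots$. The side condition $x,y \notin |\Omega_L\,\Omega_R|$ lets us drop the new hypotheses. The first premise yields $\Omega_M \geq m$, and the side condition $m \geq r$ gives $\Omega_M \geq r$ by transitivity.
\item $\up I$ checks $\Omega \geq m$ directly.
\item For $\up E$ we have $\Omega \geq l \geq k$.
\item For $\down I$ the premise gives $\Omega \geq k \geq m$.
\item The introductions of $\rightimp$ and $\leftimp$ remove the fresh $x$ from the context.
\end{itemize}

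\emph{Part 2.} Weakening of $\Gamma$ is a routine induction, since $\Gamma$ only appears in side conditions $x{:}A \in \Gamma$ and is extended uniformly in premises. We rename bound variables when they clash with $\Gamma'$; such renaming is justified because of the freshness conventions.

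\emph{Part 3.} We induct on the derivation $\mathcal{E}$ of $\Gamma_2, x{:}A_m \vdash C_r \dashv \Omega(x{:}A_m)^n$. By Part 2 we may assume both derivations use $\Gamma_1 \cup \Gamma_2$. By Part 1, $\Omega_A \geq m$, and $m \geq r$ whenever $n \geq 1$.

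\begin{itemize}
\item \textbf{$\mathsf{hyp}$ on $x$.} Here $n=1$ and we return $\mathcal{D}$.
\item \textbf{$\mathsf{hyp}$ on another variable.} Here $n=0$, so the output context contains no $x$ and is unchanged.
\item \textbf{Multiplicative rules} ($\fuse I$, $\rightimp E$, $\leftimp E$, and the positive eliminations). The output context splits, and the $n$ occurrences divide as $n_1 + n_2$. The relation $|m| \sim n_i$ must hold for each part.
  \begin{itemize}
  \item If some $n_i = 0$ while $n > 0$: that premise does not mention $x$ in its output. It then has no need of substitution, because by the freshness convention $x$ can be removed from its input (strengthening, a routine auxiliary lemma). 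So only $n_i \geq 1$ invokes the IH.
  \item If $n_i > 1$: then $n > 1$, so $|m| \sim n_i$ follows from $|m| \sim n$.
  \item If $n = 0$: both $n_i = 0$ and $\ms{W} \in \sigma(m)$ is available.
  \end{itemize}
\item \textbf{Additive rules} ($\with I$, $\oplus E$). The same $n$ occurrences appear in both premises, so the IH applies directly.
\item \textbf{Structural rules on variables other than $x$.} These commute with the substitution.
\item \textbf{Structural rule acting on an occurrence of $x$.} This is the key case, and it parallels Theorem~\ref{thm:mcut-admit}.
  \begin{itemize}
  \item For $\ms{W}$ on $x$, the premise has $n-1$ occurrences. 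We apply the IH (if $n-1 = 0$ we need $\ms{W} \in \sigma(m)$, which the rule provides), then re-weaken each element of $\Omega_A$. This is allowed because $\Omega_A \geq m$ and monotonicity give $\ms{W}$ on those elements. Each such $y{:}B_k \in \Gamma_1$ satisfies $k \geq m \geq r$, meeting the side conditions of $\ms{W}$.
  \item For $\lcontr$ or $\rcontr$ on $x$, the premise has $n+1 > 1$ occurrences, and $|m| \sim n+1$ because the contraction property is present. We apply the IH, then contract each element of $\Omega_A$ one at a time in the same direction, again using monotonicity.
  \item For $\lmove$ or $\rmove$ on $x$, we apply the IH and move the elements of $\Omega_A$ in order, preserving their relative order.
  \end{itemize}
\item \textbf{Introductions that bind variables} ($\rightimp I$, $\leftimp I$, and the eliminations). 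After renaming the bound variable away from $|\Omega_A|$ and $\Gamma_1$, these go through directly.
\end{itemize}

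\emph{Main obstacle.} The delicate point is bookkeeping rather than logic. First, we must justify that a premise whose output omits $x$ never needs substitution; this requires a strengthening lemma stating that unused hypotheses can be removed from $\Gamma$. Second, we must check that moving or contracting a whole block $\Omega_A$ element by element yields exactly $\Omega(\Omega_A)^n$, including when $\Omega_A$ itself contains repeated variables.
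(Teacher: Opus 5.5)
Your proposal is correct and takes essentially the same approach as the paper, which only states that all three parts follow by simple rule inductions on the given derivations (for Part 3, on the derivation of $\Gamma_2, x{:}A_m \vdash C_r \dashv \Omega(x{:}A_m)^n$). Your case analysis is a faithful expansion of that one-line argument: structural rules acting on $x$ are handled as in the multicut admissibility proof using monotonicity, the proviso $|m| \sim n$ is maintained in each recursive call, and strengthening handles premises whose output omits $x$.
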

\begin{proof}
  All by simple rule inductions on given derivations.  
\end{proof}

\begin{theorem}[Sequent Calculus to Natural Deduction]
  \label{thm:seq2nd}\mbox{}\newline
  If $\Omega \vdash A_m$ then $\Gamma \vdash A_m \dashv \Omega$ for all $\Gamma \supseteq \Omega$
\end{theorem}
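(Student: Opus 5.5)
We argue by induction on the derivation of $\Omega \vdash A_m$. Since the sequent calculus of \autoref{fig:ordered-seq} includes cut, we could first apply \autoref{thm:cut-elim} and work only with cut-free derivations. We prefer to keep cut, however, because it is handled directly by part~3 of the Substitution Lemma. We fix an arbitrary $\Gamma \supseteq \Omega$ and construct $\Gamma \vdash A_m \dashv \Omega$. Throughout, part~2 of the Substitution Lemma lets us enlarge the unordered context when needed. Part~1 supplies the mode side conditions $\Omega \geq m$ demanded by rules such as $\up I$, which are justified by the presupposition on sequents.

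The rules split into three groups.
\begin{itemize}
\item \textbf{Identity and structural rules.} $\ms{id}$ becomes $\ms{hyp}$. Each sequent structural rule $\ms{W}$, $\lcontr$, $\rcontr$, $\lmove$, $\rmove$ maps to the identically named natural deduction rule acting on the output context. For $\ms{W}$ we need $x{:}A_m \in \Gamma$, which holds since $\Gamma \supseteq \Omega$. Its mode condition comes from the presupposition on the conclusion.
\item \textbf{Right rules.} These become introduction rules. For $\rightimp R$, apply the induction hypothesis to the premise with $\Gamma, x{:}A_m \supseteq \Omega\,(x{:}A_m)$, choosing $x$ fresh, then use $\rightimp I$. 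The same pattern handles $\leftimp R$, $\fuse R$, $\with R$, $\oplus R$, $\one R$ and $\down R$. For $\up R$, the side condition $\Omega \geq m$ of $\up I$ is the presupposition of the conclusion.
\item \textbf{Cut.} From the induction hypotheses we get $\Gamma \vdash A_m \dashv \Omega_A$ and $\Gamma, x{:}A_m \vdash C_r \dashv \Omega_L\,(x{:}A_m)\,\Omega_R$. Substitution with $n=1$ (always allowed, since $|m| \sim 1$) gives $\Gamma \vdash C_r \dashv \Omega_L\,\Omega_A\,\Omega_R$.
\end{itemize}

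The left rules are the heart of the proof. Each is simulated by an elimination applied to a hypothesis and then substituted into the induction hypothesis for the premise.
\begin{itemize}
\item \textbf{$\rightimp L$.} The induction hypotheses give $\Gamma \vdash A_m \dashv \Omega_A$ and $\Gamma, y{:}B_m \vdash C_r \dashv \Omega_L\,(y{:}B_m)\,\Omega_R$. From $\ms{hyp}$ on $f$ and $\rightimp E$ we get $\Gamma \vdash B_m \dashv (f{:}A_m \rightimp B_m)\,\Omega_A$. Substitution for $y$ then gives exactly $\Omega_L\,(f{:}\ldots)\,\Omega_A\,\Omega_R$. The case $\leftimp L$ is symmetric, using $\leftimp E$ with $\Omega_A$ placed to the left.
\item \textbf{$\with L_i$ and $\up L$.} Use $\with E_i$ or $\up E$ on $\ms{hyp}$, then substitute. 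For $\up L$, substitution requires the substituted mode $l$ to satisfy $l \geq r$, which is exactly the side condition of $\up L$.
\item \textbf{Positive left rules ($\fuse L$, $\oplus L$, $\down L$, $\one L$).} Use the matching elimination rule with major premise $\ms{hyp}$ on $x$, so that $\Omega_M = (x{:}\ldots)$. The minor premise is the induction hypothesis with fresh variables $x_1, x_2$ (respectively $y$). The side condition $m \geq r$ follows from the presupposition on the conclusion of the left rule.
\end{itemize}

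We expect the main obstacle to be bookkeeping of variable names rather than any genuinely hard step. The elimination rules require the bound variables to be absent from $\Omega_L\,\Omega_R$. Since $x$ may already appear in $\Gamma$, the new hypotheses must be named freshly, possibly after renaming in the premise derivation, which is harmless because sequent derivations are invariant under renaming of fresh variables. The duplicate occurrences of $x$ created by contraction need care of the same kind. If the premise of $\fuse L$ still mentions the original $x$ elsewhere in $\Omega_L\,\Omega_R$, the translation must still type-check. It does, because the sequent rule consumes only the one occurrence of $x$ that it decomposes, and the remaining occurrences stay as ordinary hypotheses $x \in \Gamma$.
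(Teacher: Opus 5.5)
Your proposal is correct and takes the same route as the paper, whose proof is just an induction on the sequent derivation that uses the Substitution Lemma in several cases. Your case analysis fills in that sketch faithfully: substitution handles cut and the left rules for $\rightimp$, $\leftimp$, $\with$ and $\up$, and the positive left rules are simulated by eliminations whose major premise is $\mathsf{hyp}$. One small imprecision: the side condition $\Omega \geq m$ for $\up I$ comes from the presupposition on the conclusion sequent, not from part~1 of the Substitution Lemma, and you do say this correctly in the $\up R$ case itself.
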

\begin{proof}
  By induction on the given derivation, using substitution in several cases. 
\end{proof}

\begin{theorem}[Natural Deduction to Sequent Calculus]
  \label{thm:nd2seq}\mbox{}\newline
  If $\Gamma \vdash A_m \dashv \Omega$ then $\Omega \vdash A_m$.
\end{theorem}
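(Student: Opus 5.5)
We argue by induction on the derivation of $\Gamma \vdash A_m \dashv \Omega$, building a sequent derivation of $\Omega \vdash A_m$ that may use $\ms{cut}$. Theorem~\ref{thm:cut-elim} then removes the cuts if a cut-free derivation is wanted. We keep the first part of the Substitution lemma at hand: every natural deduction judgment satisfies $\Omega \geq m$ and $\Omega \subseteq \Gamma$. This gives exactly the independence side conditions needed to form sequents and cuts. The hypothesis set $\Gamma$ plays no role in the sequent we produce, so freshness conditions such as $x \notin |\Omega_L\,\Omega_R|$ only ensure that the variables we introduce do not clash.

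\textbf{Structural, hypothesis and introduction rules.}
\begin{itemize}
\item The rules $\ms{W}$, $\lcontr$, $\rcontr$, $\lmove$, $\rmove$ on output contexts map directly to the sequent calculus rules of the same name. The side conditions ($\ms{W}\in\sigma(m)$, etc.) carry over.
\item $\ms{hyp}$ becomes $\ms{id}$.
\item Each introduction rule maps to the corresponding right rule. For example, $\rightimp I$ gives $\Omega\,(x{:}A_m)\vdash B_m$ by induction and then $\rightimp R$; $\fuse I$ concatenates via $\fuse R$.
\item For $\down I$ we need $\Omega \geq k$. This follows from part 1 of the Substitution lemma applied to the premise $\Gamma\vdash A_k\dashv\Omega$.
\item For $\up I$, $\up R$ has no side condition. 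The check $\Omega \geq m$ in $\up I$ is exactly what makes the conclusion sequent well formed.
\end{itemize}

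\textbf{Elimination rules.} Each elimination becomes a cut against the left rule applied to an identity.
\begin{itemize}
\item For $\rightimp E$, we have $\Omega_L \vdash A_m \rightimp B_m$ and $\Omega_R \vdash A_m$. We build $(f{:}A_m\rightimp B_m)\,\Omega_R \vdash B_m$ by $\rightimp L$ with $\ms{id}$ on $B_m$, then cut $f$ against the first derivation. The mode conditions $\Omega_L \geq m \geq m$ and $\Omega_R \geq m$ come from part 1.
\item $\leftimp E$ is symmetric, placing $f$ to the right of $\Omega_L$.
\item $\with E_i$ and $\up E$ work the same way. For $\up E$ the side condition $l\geq r$ of $\up L$ becomes $k\geq k$.
\item For the positive eliminations ($\fuse E$, $\oplus E$, $\down E$, $1E$), induction on the second premise gives, for example, $\Omega_L\,(x{:}A_m)\,(y{:}B_m)\,\Omega_R \vdash C_r$. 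Applying $\fuse L$ yields $\Omega_L\,(z{:}A_m\fuse B_m)\,\Omega_R\vdash C_r$. A cut with the induction result $\Omega_M\vdash A_m\fuse B_m$ then gives $\Omega_L\,\Omega_M\,\Omega_R\vdash C_r$. The cut requires $\Omega_M \geq m \geq r$: the first part comes from part 1 and the second is the explicit premise $m\geq r$ of the elimination rule.
\item $\oplus E$ uses $\oplus L$ on the two branches, which share the context $\Omega_L\,\_\,\Omega_R$.
\item $1E$ uses $1L$.
\end{itemize}

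\textbf{Main obstacle.} The delicate point is bookkeeping rather than conceptual. The sequent calculus presupposes that every antecedent has mode $\geq r$, and the left rules $\down L$, $\up L$ carry mode conditions. We must check, case by case, that the output contexts produced by induction satisfy these presuppositions; part 1 of the Substitution lemma supplies each such fact.

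A second subtlety is duplicate variable occurrences in $\Omega$ before contraction is applied. The sequent calculus tolerates these, provided all occurrences label the same proposition. So we must confirm that the translation never identifies distinct hypotheses with the same label. This is guaranteed by the freshness side conditions on the eliminated variables, which keep $x,y$ out of $\Omega_L\,\Omega_R$ and, by the global freshness assumption, out of $\Omega_M$.
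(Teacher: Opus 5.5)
Your proposal is correct and takes the same approach as the paper: induction on the natural deduction derivation. Structural rules and introductions map to their sequent counterparts, and eliminations are handled by cut (with identity for the negative connectives). Part 1 of the Substitution lemma discharges the mode conditions, such as $\Omega \geq k$ for $\down R$ and $\Omega_M \geq m$ for the cuts. Your case analysis spells out in full what the paper's one-line proof leaves implicit.
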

\begin{proof}
  By induction on the structure of the given derivation, using cut and identity
  in several cases.
\end{proof}

\subsection{Implicit Structural Rules}
\label{sec:nd-implicit}
If these explicit structural rules from the last section are elided, checking the validity of the
remaining proof skeleton is not immediately decidable.  One issue arises
from the rule of weakening
\begin{smallrules}
  \infer[\mathsf{W}]
  {\Gamma \vdash  C_r \dashv \Omega_L\, (x{:}A_m)\, \Omega_R}
  {\Gamma \vdash C_r\dashv \Omega_L\, \Omega_R
    & (m \geq r) & x{:}A \in \Gamma & \mathsf{W} \in \sigma(m) }
\end{smallrules}
which could be applied arbitrarily often.  To address this issue, we take
advantage of the following property, observed by
\cite{Kanovich19mscs}: mobility is derivable from weakening and contraction.
The following shows how to derive left mobility from left contraction and weakening:
\begin{smallrules}
  \infer[\lcontr]
  {\Gamma \vdash C_r \dashv {\Omega_L\, (x{:}A_m)\, \Omega_M\, \Omega_R}}
  {\infer[\mathsf{W}]
    {\Gamma \vdash C_r \dashv {\Omega_L\, (x{:}A_m)\, \Omega_M\, (x{:}A_m)\, \Omega_R}}
    {\Gamma \vdash C_r \dashv {\Omega_L\, \Omega_M\, (x{:}A_m)\, \Omega_R}
        & (m \geq r) & (x{:}A) \in \Gamma
        & \mathsf{W} \in \sigma(m)}
      & \lcontr \in \sigma(m)}
\end{smallrules}
While we could just use this derived rule, it is convenient to assume the
following two closure properties:
\begin{align}
  \text{If }\mathsf{W} \in \sigma(m) \text{ and } \mathsf{C}^\leftarrow \in \sigma(m) \text{ then } \ms{M}^\leftarrow \in \sigma(m)\\
  \text{If }\mathsf{W} \in \sigma(m) \text{ and } \mathsf{C}^\rightarrow \in \sigma(m) \text{ then } \ms{M}^\rightarrow \in \sigma(m)
  \end{align}

Under conditions (1) and (2), we can restrict weakening
to only be allowed when the variable is not yet present in the output as
follows:
\begin{smallrules}
  \infer[\mathsf{W}]
  {\Gamma \vdash C_r \dashv {\Omega_L\, (x{:}A_m)\, \Omega_R}}
  {\Gamma \vdash C_r \dashv {\Omega_L\, \Omega_R}
    & (m \geq r)
    & (x{:} A_m) \in \Gamma
    & \mathsf{W} \in \sigma(m)
    & x \not\in |\Omega_L\, \Omega_R|}
\end{smallrules}
The following also hold, however these are not explicitly needed anywhere, but they reduce the total number of unique combinations of structural properties: 
\begin{align}
    \text{If }\lcontr \in \sigma(m) \text{ and } \rmove \in \sigma(m) \text{ then } \rcontr \in \sigma(m)\\
    \text{If }\rcontr \in \sigma(m) \text{ and } \lmove \in \sigma(m) \text{ then } \lcontr \in \sigma(m)
      \end{align}

\noindent To construct a system with implicit structural rules that has a
decidable algorithm for proof checking, we bundle the structural rules into a
new judgment $\Omega \reduces{\Gamma}{r} \OmegaNF$ where $\OmegaNF$ denotes a
\emph{normal form} where no variable is repeated.  We write $\Omega\; \nf$ for
the judgment that holds iff $\Omega$ is normal in this sense.  We carry the
unordered $\Gamma$ and mode $r$ in order to control weakening, with the
presupposition that $\Gamma \supseteq \Omega$ and $\Omega \geq r$ and ensure
that $\Gamma \supseteq \OmegaNF$ and $\OmegaNF \geq r$.  Algorithmically, we
view the left-hand side $\Omega$ together with $\Gamma$ and $r$ as inputs and
$\OmegaNF$ as output.
\begin{smallrules}
  \infer[\mathsf{W}]{\Omega_L\Omega_R \reduces{\Gamma}{r} \OmegaNF}
  {\Omega_L(x{:}A_m)\Omega_R \reduces{\Gamma}{r} \Omega_{\mathsf{nf}}
    & \mathsf{W} \in \sigma(m)
    & m \geq r
    & x{:}A_m \in \Gamma & x \notin |\Omega_L\Omega_R|}
  \\[0.75em]
  \infer[\mathsf{\lcontr}]
  {\Omega_L(x{:}A_m)\Omega_M(x{:}A_m)\Omega_R \reduces{\Gamma}{r} \Omega_{\mathsf{nf}}}
  {\Omega_L(x{:}A_m)\Omega_M\Omega_R \reduces{\Gamma}{r} \Omega_{\mathsf{nf}} & \lcontr \in \sigma(m)}
  \\[0.75em]
  \infer[\mathsf{\rcontr}]
  {\Omega_L(x{:}A_m)\Omega_M(x{:}A_m)\Omega_R \reduces{\Gamma}{r} \Omega_{\mathsf{nf}}}
  {\Omega_L\Omega_M(x{:}A_m)\Omega_R \reduces{\Gamma}{r} \Omega_{\mathsf{nf}} & \rcontr \in \sigma(m)}
  \\[0.75em]
  \infer[\mathsf{\lmove}]
  {\Omega_L\Omega_M(x{:}A_m)\Omega_R \reduces{\Gamma}{r} \Omega_{\mathsf{nf}}}
  {\Omega_L(x{:}A_m)\Omega_M\Omega_R \reduces{\Gamma}{r} \Omega_{\mathsf{nf}} & \lmove \in \sigma(m)}
  \\[0.75em]
  \infer[\mathsf{\rmove}]
  {\Omega_L(x{:}A_m)\Omega_M\Omega_R \reduces{\Gamma}{r} \Omega_{\mathsf{nf}}}
  {\Omega_L\Omega_M(x{:}A_m)\Omega_R \reduces{\Gamma}{r} \Omega_{\mathsf{nf}} & \rmove \in \sigma(m)}
  \\[0.75em]
  \infer[\ms{id}]{\Omega \reduces{\Gamma}{r} \Omega}{(\Omega\; \nf)}
\end{smallrules}
For any given $\Omega$, $\Gamma$, and $r$, there may be many $\OmegaNF$ such
that $\Omega \reduces{\Gamma}{r} \OmegaNF$.  To account for this we define
\[
  \NF_{\Gamma, r}(\Omega) = \{\OmegaNF \mid \Omega \reduces{\Gamma}{r} \OmegaNF\}
\]
For example, given $\Gamma = x{:}A_m,y{:}B_k$, $\Omega = y{:}B_k$ with $m \geq r$, $k \geq r$, and $\sigma(m) = \{\mathsf{W}\}$ we obtain
\(
  \NF_{\Gamma,r}(\Omega) = \{(y:B_k), (x:A_m)(y:B_k), (y:B_k)(x:A_m)\}
  \)
  
Since $\Gamma$ and the mode $r$ are usually immediately apparent from context,
we often omit them.  Fortunately, $\NF_{\Gamma, r}(\Omega)$ is always finite because the
only rule that increases the size of the context (reading bottom-up) is
weakening, and it is restricted to variables that are in $\Gamma$ but not yet in
the input context $\Omega$.


We define a new judgment
$\Gamma \vdash A_m \Dashv \Xi$, where $\Xi$ is a \emph{set of ordered contexts} in normal form.  The guiding principle is embodied in
\autoref{thm:complete-implicit}: If $\Gamma \vdash A_m \dashv \Omega$ then
$\Gamma \vdash A_m \Dashv \Xi$ where all normal forms of $\Omega$ are contained
in $\Xi$.

Logical rules are defined via set construction. Because the output of the judgment $\Gamma \vdash A_m \Dashv \Xi$
is a set of contexts, we lift the normal form operation to sets, defining
\[
  \NF_{\Gamma, r}(\Xi) = \bigcup_{\Omega \in \Xi}\{ \Omega' \mid \Omega' \in \NF_{\Gamma, r}(\Omega)\}
\]
Consider the $\fuse I$ rule (We write $\Xi_1\, \Xi_2$ for $\{\Omega_1\, \Omega_2 \mid \Omega_1 \in \Xi_1, \Omega_2 \in \Xi_2\}$):
\begin{smallrules}
  \infer[\fuse I]
  {\Gamma \vdash A_m \fuse B_m \Dashv \NF_{\Gamma, m}(\Xi_1\, \Xi_2)}
  {\Gamma \vdash  A_m \Dashv \Xi_1 & \Gamma \vdash B_m \Dashv \Xi_2 
    }
\end{smallrules}

\noindent The $\rightimp I$ rule becomes:
\begin{smallrules}
  \infer[\rightimp I]
  {\Gamma \vdash A_m \rightimp B_m \Dashv \Xi}
  {\Gamma,x{:}A_m \vdash  B_m \Dashv \Xi' & \Xi'\bbar\Xi(x{:}A_m)}
\end{smallrules}
$\Xi' \bbar \Xi(x{:}A_m)$ filters $\Xi'$ to only contexts that end in
$x{:}A_m$. Then in the conclusion
$\Xi = \{ \Omega \mid \Omega\, (x {:} A_m) \in \Xi'\}$.  Similarly, we write
$\Xi \bbar_m = \{\Omega \in \Xi \mid \Omega \geq m\}$.  This is used in the $\up I$ rule. Implicitly, all rules fail if no output contexts are possible. 

\noindent
The last operator is related to the case where multiple output contexts must be equivalent, that is the $\with I$ rule and the $\oplus E$ rule. We walk through $\with I$: 
\begin{smallrules}
    \infer[\with I]{\Gamma \vdash A_m \with B_m \Dashv \Xi_1\cap \Xi_2}{\Gamma \vdash  A_m \Dashv \Xi_1 & \Gamma \vdash  B_m \Dashv \Xi_2 
    }
\end{smallrules}
Since $\Xi_1$ and $\Xi_2$ might not be the same, we
output the intersection to capture those contexts on which they do agree.  The other rules can be found in \autoref{fig:implicit-nd}.

As an example of this system, consider the following derivations (we use just the variable name on the output context for space). 
We place each context in the set in square brackets as a visual separator.
\begin{scriptsize}
  \begin{smallrules}
    \infer[\fuse E]{x{:}\up_m^k (A_m \with B_m) \vdash \up_m^k A_m \fuse \up_m^k B_m \Dashv {NF}(\{[x]\}\{[x]\}) }
    {\infer[\up I]
      {x{:}\up_m^k (A_m \with B_m) \vdash \up_m^kA_m \Dashv \{[x]\}}
      {\infer[\with E]
        {x{:}\up_m^k (A_m \with B_m) \vdash A_m \Dashv \{[x]\}}
        {\infer[\up E]
          {x{:}\up_m^k (A_m \with B_m) \vdash A_m \with B_m \Dashv \{[x]\}}
          {\infer[\mathsf{hyp}]
            {x{:}\up_m^k (A_m \with B_m) \vdash \up_m^k A_m \with B_m \Dashv \{[x]\}}
            {}
            & k \geq k}}}
      & \hspace*{-3em}\infer[\up I]{x{:}\up_m^k (A_m \with B_m) \vdash \up_m^kB_m \Dashv\{[x]\}}{\cdots}}
\end{smallrules}  
\end{scriptsize}%
For the NF operation to succeed in the conclusion, the mode $k$ must admit contraction in either direction. The reverse entailment requires that $k$ admit weakening.



\begin{figure}
  \begin{smallrules}
    \infer[1I]{\Gamma \vdash 1_m \Dashv \NF(\{\cdot\})}{} \hspace{3em}
    \infer[1E]{\Gamma 
      \vdash C_r \Dashv \NF(\Xi_L\, \Xi_M\, \Xi_R)}{\Gamma \vdash  1_m\Dashv {\Xi_M} & m \geq r & \Gamma 
      \vdash C_r \Dashv {\Xi_L\, \Xi_R}}
    \\[0.75em]
    \infer[\fuse I]{\Gamma 
      \vdash A_m \fuse B_m \Dashv \NF(\Xi_L\Xi_R)}{\Gamma 
      \vdash A_m \Dashv {\Xi_L} & \Gamma 
      \vdash B_m \Dashv {\Xi_R}}
    \\[0.75em]
    \infer[\fuse E]{\Gamma 
      \vdash C_r\Dashv \NF(\Xi_L\Xi_M\Xi_R)}{\Gamma \vdash  A_m \fuse B_m\Dashv {\Xi_M} & m \geq r & \Gamma,x{:}A_m,y{:}B_m 
      \vdash C_r \Dashv \Xi & \Xi\bbar\Xi_L(x{:}A_m) (y{:}B_m) \Xi_R}
    \\[0.75em]
    \infer[\oplus I_1]{\Gamma 
      \vdash A_m \oplus B_m \Dashv \Xi}{\Gamma 
      \vdash A_m \Dashv {\Xi }}\hspace{3em}
    \infer[\oplus I_2]{\Gamma 
      \vdash A_m \oplus B_m \Dashv \Xi}{\Gamma 
      \vdash B_m \Dashv {\Xi }}
    \\[0.75em]
    \infer[\oplus E]{\Gamma 
      \vdash C_r\Dashv \NF(\Xi_L\Xi_M\Xi_R)}{\Gamma \vdash A_m \oplus B_m\Dashv \Xi_M  & m \geq r & \deduce{\Gamma,x{:}B_m \vdash C_r \Dashv \Xi_B}{ \Gamma,x{:}A_m \vdash C_r \Dashv \Xi_A} & 
      \deduce{\Xi_L\Xi_R = \Xi_L^A\Xi_R^A \cap \Xi_L^B\Xi_R^B}{\deduce{\Xi_B \bbar \Xi_L^B\, (x{:}B_m)\, \Xi_R^B}{\Xi_A \bbar \Xi_L^A\, (x{:}A_m)\, \Xi_R^A}} }
    \\[0.75em]
    \infer[\down I]{\Gamma \vdash  \down^l_m A_l \Dashv \NF_{\Gamma,m}(\Xi)}{\Gamma 
      \vdash A_l \Dashv {\Xi} }
    \\[0.75em]
    \infer[\down E]
    {\Gamma \vdash C_r \Dashv \NF_{\Gamma, r}(\Xi_L\Xi_M\Xi_R)}
    {\Gamma \vdash  \down^l_m A_l \Dashv {\Xi_M} & m \geq r & \Gamma,x{:}A_l 
      \vdash C_r\Dashv \Xi & \Xi\bbar {\Xi_L\, (x{:}A_l)\, \Xi_R}}
    \\[0.75em]\hline\\[0.75em]
    \infer[\rightimp I]{\Gamma 
      \vdash A_m \rightimp B_m\Dashv \Xi}{\Gamma,x{:}A_m 
      \vdash B_m \Dashv \Xi'  & \Xi' \bbar \Xi(x{:}A_m)}
    \hspace{3em}
    \infer[\rightimp E]{\Gamma \vdash  B_m\Dashv \NF(\Xi_L\Xi_R)}{\Gamma \vdash  A_m \rightimp B_m\Dashv {\Xi_L} & \Gamma 
      \vdash A_m\Dashv {\Xi_R}}
    \\[0.75em]
    \infer[\leftimp I]{\Gamma 
      \vdash A_m \leftimp B_m \Dashv \Xi}{\Gamma,x{:}A_m 
      \vdash B_m \Dashv {\Xi'} & \Xi' \bbar (x{:}A_m)\Xi}
    \hspace{3em}
    \infer[\leftimp E]{\Gamma \vdash  B_m\Dashv \NF(\Xi_L\Xi_R)}{\Gamma \vdash  A_m \leftimp B_m \Dashv {\Xi_R} & \Gamma 
      \vdash A_m \Dashv {\Xi_L}}
    \\[0.75em]
    \infer[\with I]{\Gamma 
      \vdash A_m \with B_m \Dashv \Xi_A \cap \Xi_B}{\Gamma 
      \vdash A_m \Dashv \Xi_A & \Gamma \vdash B_m \Dashv \Xi_B}
    \\[0.75em]
    \infer[\with E_1]{\Gamma \vdash A_m \Dashv \Xi}{\Gamma \vdash  A_m \with B_m \Dashv \Xi}
    \hspace{3em}
    \infer[\with E_2]{\Gamma \vdash B_m \Dashv \Xi}{\Gamma \vdash  A_m \with B_m \Dashv \Xi}
    \\[0.75em]
    \infer[\up I]{\Gamma \vdash \up_k^m A_k\Dashv \Xi\bbar_m}{\Gamma 
      \vdash A_k\Dashv {\Xi}}
    \hspace{3em}
    \infer[\up E]{\Gamma \vdash A_k \Dashv \NF_{\Gamma,k}(\Xi)}{\Gamma \vdash \up_k^m A_k \Dashv {\Xi}}
    \\[0.75em]\hline\\[0.75ex]
    \infer[\mathsf{hyp}]{\Gamma \vdash  A_m \Dashv \NF(\{x{:}A_m\})}{x{:}A_m \in \Gamma}
  \end{smallrules}
  \caption{Implicit Natural Deduction}
  \label{fig:implicit-nd}
\end{figure}

\subsection{Proof of Correctness and Termination}
\label{sec:decidability}

We prove the system with implicit structural rules sound and complete with
respect to the system defined in \autoref{sec:nd-explicit}.
We start with a straightforward lemma.
 \begin{lemma}[Contraction commutes with other structural rules]
 \label{lemma:contraction} 
    \end{lemma}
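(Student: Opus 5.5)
We read the lemma as the following permutation property for the reduction judgment $\Omega \reduces{\Gamma}{r} \OmegaNF$. Suppose $\Omega \reduces{\Gamma}{r} \OmegaNF$ is derived, and the derivation applies a contraction ($\lcontr$ or $\rcontr$) above an application of $\mathsf{W}$, $\lmove$ or $\rmove$. Then an equivalent derivation exists, with the same $\Omega$ and $\OmegaNF$, in which the contraction is applied first (reading bottom-up) and the other structural rule afterwards. Iterating this, every derivation can be put in the form
\[
\Omega \xrightarrow{\text{contractions}} \Omega' \xrightarrow{\mathsf{W},\,\lmove,\,\rmove} \OmegaNF .
\]
The same permutation property also holds for the explicit structural rules of \autoref{sec:nd-explicit}.

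The proof is by induction on the derivation, with an inner case analysis on adjacent pairs of rules: a non-contraction rule $\rho$ immediately followed (bottom-up) by a contraction $c$ on some variable $x{:}A_m$. We permute the pair and then apply the induction hypothesis to the shorter upper part. Termination of this rewriting is measured by the number of pairs (non-contraction rule, contraction above it).

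The cases split by which variable each rule acts on.
\begin{enumerate}
\item \textbf{$\rho$ acts on a variable $y \neq x$.} The rules act on disjoint occurrences, so they commute directly. The side conditions ($\mathsf{W}\in\sigma$, $m \geq r$, membership in $\Gamma$, and the mobility or contraction flags) are unchanged. The one check is the freshness condition $y \notin |\Omega_L\Omega_R|$ of $\mathsf{W}$, which is unaffected by removing a copy of $x$.
\item \textbf{$\rho$ is $\mathsf{W}$ on $x$ itself.} This cannot occur. Weakening introduces $x$ only when $x$ is absent, and then a single occurrence cannot be contracted. Strictly, a later mobility could not create a second copy either, so no contraction on $x$ can sit above this $\mathsf{W}$.
\item \textbf{$\rho$ is a mobility of one copy of $x$.} There are two copies, one moved by $\rho$ and then merged by $c$. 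We replace the pair by a single contraction that keeps the copy in the desired final position. If that requires the other direction of contraction, we use closure property (3) or (4), or absorb the move entirely, since contracting into the already-correctly-placed copy needs no mobility.
\end{enumerate}

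The main obstacle is the third case, moving one copy of $x$ and then contracting. Here the direction of the available contraction may not match the geometry, for example a $\rmove$ followed by $\lcontr$ when $x$'s mode lacks $\rcontr$. The first thing to try is a subcase analysis on the relative positions of the moved copy, its target, and the other copy. In most configurations the composite is just a contraction keeping the left (respectively right) copy, possibly followed by a move of the survivor, which is allowed because $\rho$ witnesses the mobility flag. In the remaining configurations we invoke properties (3) and (4). If some configuration still escapes, we would weaken the claim to say that contractions can be pushed below weakening only. Weakening is the only rule that matters for finiteness of $\NF_{\Gamma,r}$ and for the later completeness proof.
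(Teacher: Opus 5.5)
Your proposal is correct and is the same permutation argument the paper gives; the paper displays only one instance (a $\rcontr$ directly above a $\rmove$ on the same variable, replaced by a single $\rcontr$ with a vacuous move), so your case split is a more complete version of it. In case 3 no configuration escapes: apply the same contraction to the unmoved copies and then move the survivor in $\rho$'s direction, which always works because the kept copy is the leftmost (for $\lcontr$) or rightmost (for $\rcontr$) of the two, and this is monotone in the moved copy's position. Properties (3) and (4) and your fallback are therefore unnecessary, and the fallback would be too weak for Theorem~\ref{thm:complete-restriction} anyway. Your side claim about the rules of \autoref{sec:nd-explicit} does need a different argument: weakening there is unrestricted, so your case 2 does occur and must be collapsed to a mobility (or to nothing) using properties (1) and (2).
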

    \begin{proof}
    Proof proceeds by a series of reductions.
    \begin{description}
    \item[Case: Mobility.]
    We show contraction followed by mobility in the same direction. Here the dotted line indicates a vacuous rule. We could also reduce this to just a single contraction.
    \begin{smallrules}
    \infer[M^\rightarrow]{\Omega_{L1}(x{:}A_m)\Omega_{L2}\Omega_M(x{:}A_m)\Omega_R \reduces{\Gamma}{r} \Omega}{\infer[C^\rightarrow]{\Omega_{L1}\Omega_{L2}(x{:}A_m)\Omega_M(x{:}A_m)\Omega_R \reduces{\Gamma}{r} \Omega}{\Omega_{L1}\Omega_{L2}\Omega_M(x{:}A_m)\Omega_R \reduces{\Gamma}{r} \Omega & C^\rightarrow \in \sigma(m)} & M^\rightarrow \in \sigma(m)}
\\
\rightsquigarrow
\\
    \infer[C^\rightarrow]{\Omega_{L1}(x{:}A_m)\Omega_{L2}\Omega_M(x{:}A_m)\Omega_R \reduces{\Gamma}{r} \Omega}{\infer-[M^\rightarrow]{\Omega_{L1}\Omega_{L2}\Omega_M(x{:}A_m)\Omega_R \reduces{\Gamma}{r} \Omega}{\Omega_{L1}\Omega_{L2}\Omega_M(x{:}A_m)\Omega_R \reduces{\Gamma}{r} \Omega & M^\rightarrow \in \sigma(m)} & C^\rightarrow \in \sigma(m)}
\end{smallrules}
    
\end{description}
        
   \end{proof}
With this lemma, we can proceed to state and prove the following theorem:

\begin{theorem}[Restricted Weakening is Complete]
\label{thm:complete-restriction}
A version of context reduction where the added hypothesis may already exist
produces the same set of normal form as the one with restricted weakening.
\end{theorem}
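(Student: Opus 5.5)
Write $\reduces{\Gamma}{r}^{+}$ for the unrestricted context reduction, which drops the side condition $x \notin |\Omega_L\,\Omega_R|$ from $\mathsf{W}$; all other rules are unchanged. Every restricted derivation is an unrestricted one, so $\NF_{\Gamma,r}(\Omega) \subseteq \NF^{+}_{\Gamma,r}(\Omega)$ is immediate. The work is the converse: any derivation of $\Omega \reduces{\Gamma}{r}^{+} \OmegaNF$ can be transformed into one in which every weakening introduces a variable absent from the current context. I would argue by induction on the number of \emph{bad} weakenings, meaning weakenings of some $x{:}A_m$ at a point where $x$ already occurs in the context. Inside that, I would use an inner induction on the height of the derivation above the topmost bad weakening.

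Read bottom-up from the input, each rule rewrites the context. Pick the last bad weakening, i.e.\ the one closest to the root $\ms{id}$ (the normal form). Since $\ms{id}$ requires $\OmegaNF\;\nf$, there are no duplicates at the end. So the two copies of $x$ present just after this weakening must later be merged by a contraction $\lcontr$ or $\rcontr$, or one of them must be removed by some other step. Weakening never removes variables, so contraction is the only option.

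Using \autoref{lemma:contraction}, I would commute this contraction downward (towards the bad weakening) past the intervening mobility, contraction and good-weakening steps. Mobility steps on other variables and weakenings of other variables commute trivially, since they touch disjoint positions. Mobility on $x$ itself is handled by the reduction shown in the lemma's proof, which replaces contraction-after-mobility with contraction-after-vacuous-mobility. After commuting, the bad weakening of $x$ is immediately followed by a contraction $\lcontr$ or $\rcontr$ merging the new copy with an old copy. This pair of steps is exactly the derived mobility displayed earlier in this section. By closure properties (1) and (2), $\mathsf{W}\in\sigma(m)$ together with $\lcontr\in\sigma(m)$ (resp.\ $\rcontr$) yields $\lmove\in\sigma(m)$ (resp.\ $\rmove$). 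So the pair can be replaced by a single mobility step, or by nothing if the position is unchanged. This strictly decreases the number of bad weakenings, and the outer induction finishes.

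The main obstacle is the commuting step. First, the copy removed by the contraction may be the old occurrence rather than the freshly weakened one, and the two copies sit in some relative order. This leads to four combinations of left/right contraction and new copy left/right of the old one. In each I must check that the resulting net displacement of $x$ is realized by a mobility direction that the closure properties actually supply. For example, a weakening of $x$ on the left followed by $\rcontr$ keeping the left copy moves $x$ leftwards, which would seem to need $\lmove$. In such mismatched cases I expect to instead reinterpret which copy is retained, since the two copies carry the same label, so that the contraction direction agrees with the displacement. Second, \autoref{lemma:contraction} is stated only for contraction followed by mobility in the same direction. I would need to extend its case analysis to opposite directions and to intervening contractions on $x$ itself, in the latter case merging several copies into one. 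This enlarged case analysis is where I expect most of the effort to go.
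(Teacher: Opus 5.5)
Your proposal is correct and is essentially the paper's argument: you take a bad weakening whose premise derivation is already restricted, push the contraction of $x$ down to it via \autoref{lemma:contraction}, and replace the weakening--contraction pair by nothing or by the mobility step that closure properties (1) and (2) supply (the paper gets the restricted premise from the induction hypothesis rather than by counting bad weakenings). The direction mismatch you worry about never arises, because $\lcontr$ always retains the left copy and $\rcontr$ the right one, so the net displacement is either zero or in the direction of the contraction (a new copy on the left followed by $\rcontr$ just keeps the old copy); this is exactly the paper's case split ($\lcontr$ is a no-op and $\rcontr$ becomes $\rmove$ for a new copy on the right) together with its mirror image.
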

\begin{proof}
  One inclusion is immediate.  The other permutes contractions
  with other structural rules, replacing weakening followed by
  contraction with mobility.
\begin{smallrules}
    \infer[\mathsf{W}]{\Omega_L(x{:}A_m)\Omega_M\Omega_R \reduces{\Gamma}{r} \Omega}{\deduce{\Omega_L(x{:}A_m) \Omega_M (x{:}A_m) \Omega_R \reduces{\Gamma}{r} \Omega}{\mathcal{D}} & x \in \Gamma & \mathsf{W} \in \sigma(m) & m \geq r}
\end{smallrules}
By IH($\mathcal{D}$) we conclude:
    \[\deduce{\Omega_L(x{:}A_m)\Omega_M(x{:}A_m)\Omega_R \reduces{\Gamma}{r} \Omega}{\mathcal{D'}}\]
    
Because we can reduce to $\Omega$ which must be in normal form, there must exist a contraction of $x{:}A_m$,  in the derivation $\mathcal{D'}$. By lemma \ref{lemma:contraction}, we push this contraction to the end, and have a valid derivation of the premise. And so we have the following (modified) derivation of the conclusion of $\mathcal{D}'$ 
\begin{smallrules}
    \infer[\lcontr]{\Omega_L(x{:}A_m)\Omega_M(x{:}A_m)\Omega_R \reduces{\Gamma}{r} \Omega}{\deduce{\Omega_L(x{:}A_m)\Omega_M\Omega_R \reduces{\Gamma}{r} \Omega}{\mathcal{D}'' } & \lcontr \in \sigma(m)}
\end{smallrules}
The derivation $\mathcal{D}''$ then has the conclusion that we require. 

Right contraction requires an extra step, we begin by applying the inductive hypothesis, and pushing down the right contraction, as with the other direction, resulting in: 
\begin{smallrules}
    \infer[\rcontr]{\Omega_L(x{:}A_m)\Omega_M(x{:}A_m)\Omega_R \reduces{\Gamma}{r} \Omega}{\deduce{\Omega_L\Omega_M(x{:}A_m)\Omega_R \reduces{\Gamma}{r} \Omega}{\mathcal{D}'''} & \rcontr \in \sigma(m)}
\end{smallrules}
At this point, we use the property that contraction plus weakening result in mobility in the same direction as the contraction. We can now apply the mobility right rule to the conclusion of derivation $\mathcal{D}'''$ as follows: 
\begin{smallrules}
    \infer[\rmove]{\Omega_L(x{:}A_m)\Omega_M\Omega_R \reduces{\Gamma}{r} \Omega}{\deduce{\Omega_L\Omega_M(x{:}A_m)\Omega_R \reduces{\Gamma}{r} \Omega}{\mathcal{D}'''} & \rmove \in \sigma(m)}
\end{smallrules}
\end{proof}

\begin{theorem}
  \label{thm:termination}
  Construction of the set $\NF_{\Gamma,r}(\Omega) = \{\Omega'\mid \Omega \reduces{\Gamma}{r}\Omega'\}$ terminates.
\end{theorem}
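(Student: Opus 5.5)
The plan is to show that the search space explored when computing $\NF_{\Gamma,r}(\Omega)$ is finite, so that an exhaustive search (memoised, or loop-checked) terminates. I read the rules of $\reduces{\Gamma}{r}$ bottom-up, with $\Omega$, $\Gamma$ and $r$ as inputs. Every sequent that can appear in a reduction derivation rooted at $\Omega$ has a left-hand side drawn from a finite set. I then argue that a search which never revisits a left-hand side already on the current branch (or which uses memoisation) must halt, since each step either moves to a new element of that finite set or stops.

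The first step is to bound the reachable contexts. Let $V = |\Gamma| \cup |\Omega|$; this is finite, and by the presupposition $\Gamma \supseteq \Omega$ it equals $|\Gamma|$. No rule read bottom-up introduces a variable outside $\Gamma$, since weakening requires $x{:}A_m \in \Gamma$. Next I bound the multiplicity of each variable.
\begin{itemize}
\item The contraction rules strictly decrease the number of occurrences of some variable.
\item Mobility preserves all multiplicities.
\item Weakening, in the restricted form justified by \autoref{thm:complete-restriction}, adds $x$ only when $x \notin |\Omega_L\Omega_R|$, so it raises a multiplicity from $0$ to $1$.
\end{itemize}
Hence every reachable context has, for each $x \in V$, multiplicity at most $\max(1, \#_x(\Omega))$. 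Its length is therefore bounded by $N = |V| + |\Omega|$. There are only finitely many sequences over $V$ of length at most $N$, and the labels are fixed by $\Gamma$, so the set $R$ of reachable left-hand sides is finite.

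The second step turns this into termination of the construction. I compute $R$ by a standard worklist closure: start from $\{\Omega\}$, and repeatedly apply every rule instance whose conclusion is the current context, keeping only new premises. Each context has finitely many rule instances, because there are finitely many choices of $x$, of positions, and of splits $\Omega_L\Omega_M\Omega_R$. The closure stabilises after at most $|R|$ additions. Then $\NF_{\Gamma,r}(\Omega)$ is the set of those $\Omega' \in R$ with $\Omega'\;\nf$, and checking $\nf$ is decidable.

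I must also confirm that this set is exactly the set of $\Omega'$ with $\Omega \reduces{\Gamma}{r} \Omega'$. Every derivation is a path through $R$ ending in $\ms{id}$. Conversely, any path can be shortened by removing cycles, so it corresponds to a finite derivation. The rules are unary, so derivations are just paths.

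I expect the main obstacle to be making precise that cycles are harmless. Mobility rules can undo each other, for example $\lmove$ then $\rmove$, and weakening combined with contraction can return to an earlier context. So a naive depth-first proof search does not terminate, and no simple size measure decreases along a derivation. I would handle this with the finiteness argument above, using a visited set, rather than trying to find a well-founded measure. I would also remark that the use of the restricted weakening rule is essential here: with unrestricted weakening, multiplicities could grow without bound. \autoref{thm:complete-restriction} guarantees that the restriction does not lose any normal forms.
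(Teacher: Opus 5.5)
Your proposal is correct, and its core matches the paper's one-line proof. Weakening can only add variables of $\Gamma$ that are currently absent, and contraction is bounded by the duplicates already in $\Omega$, since restricted weakening never creates new ones.

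The difference is in how you turn this into termination. The paper bounds the number of $\mathsf{W}$ and $\mathsf{C}$ steps and says nothing about mobility. You instead bound the whole reachable state space via the multiplicity bound $\max(1,\#_x(\Omega))$. You then compute $\NF_{\Gamma,r}(\Omega)$ as the normal elements of a worklist closure with a visited set.

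Your version is the more careful one. $\lmove$ and $\rmove$ steps are unbounded and can cycle: two left-mobile hypotheses can be swapped back and forth, and an instance with empty $\Omega_M$ is a self-loop. So bounding $\mathsf{W}$ and $\mathsf{C}$ alone does not make a naive bottom-up search terminate. Your finiteness-plus-loop-check argument supplies exactly the piece the paper leaves implicit.

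One small correction to your obstacle paragraph. With the restricted rule, weakening and contraction cannot together return to an earlier context. Weakening only raises a count from $0$ to $1$, and contraction never lowers a count below $1$. Only mobility produces cycles, and the number of $\mathsf{W}$ and $\mathsf{C}$ steps on any path is bounded just as the paper states. Your proof never uses that remark, so it is unaffected.
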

\begin{proof}\mbox{}
  Weakening is bounded by $\Gamma$, while contraction is bounded by the number
  of duplicate variables (which do not increase by restricted weakening).
\end{proof}   

\noindent We need one more lemma to finally prove correctness:

\begin{lemma}[Concatenation]
  \label{lemma:join-ok}\mbox{}\newline
  $\forall \Omega \in \NF_{\Gamma,r}(\Omega_1\Omega_2).\,
  \exists \Omega_1'\Omega_2' \in \NF_{\Gamma,r}(\Omega_1) \NF_{\Gamma,r}(\Omega_2)\
  s.t.\ \Omega_1'\Omega_2' \reduces{\Gamma}{r} \Omega$.

\end{lemma}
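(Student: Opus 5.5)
We prove the Concatenation lemma by induction on the derivation $\mathcal{D}$ of $\Omega_1\Omega_2 \reduces{\Gamma}{r} \Omega$. The idea is to split the reduction into a ``local'' part that acts inside $\Omega_1$ or inside $\Omega_2$, and a ``global'' part that mixes the two halves. Local steps are pushed into the normal forms of the halves, and global steps are left in the final reduction $\Omega_1'\Omega_2' \reduces{\Gamma}{r} \Omega$.

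The base case is $\ms{id}$. Here $\Omega_1\Omega_2$ is already normal, so $\Omega_1\,\nf$ and $\Omega_2\,\nf$, and we take $\Omega_i' = \Omega_i$ with the $\ms{id}$ rule.

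For the inductive step, let $\mathcal{D}$ end in a structural rule whose premise is $\Omega^\ast \reduces{\Gamma}{r} \Omega$. The difficulty is that $\Omega^\ast$ need not split as a concatenation matching $\Omega_1$ and $\Omega_2$, so the induction hypothesis does not apply to it directly. We therefore strengthen the statement and prove it for any derivation starting from a context $\Omega_1\Omega_2$ together with a reduction to $\Omega$. We then distinguish cases by whether the last-applied rule (read bottom-up, the first step) touches only one half.

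\begin{description}
\item[Local step.] Suppose the step is a weakening placed inside $\Omega_1$, a contraction of two copies both in $\Omega_1$, or a mobility move staying within $\Omega_1$ (and symmetrically for $\Omega_2$). Then $\Omega^\ast = \Omega_1^\ast\Omega_2$. By the IH we get $\Omega_1''\Omega_2'$ with $\Omega_1'' \in \NF(\Omega_1^\ast)$. Since $\Omega_1 \reduces{\Gamma}{r} \Omega_1^\ast$ by that same step, $\Omega_1''$ is also in $\NF(\Omega_1)$.

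One subtlety is restricted weakening. The side condition $x \notin |\Omega_L\Omega_R|$ concerns the whole context, so it implies the condition for the half, and the step is still legal there.

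\item[Global step.] Suppose the step is a contraction between a copy in $\Omega_1$ and a copy in $\Omega_2$, a mobility move crossing the boundary, or a weakening at the boundary. In this case we do not push the step into either half. Instead we normalize the halves first and replay the global step afterwards.

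Concretely, we first use Lemma~\ref{lemma:contraction} and Theorem~\ref{thm:complete-restriction} to permute $\mathcal{D}$ so that all local steps come first (bottom-up) and the cross-boundary steps come last. The local prefix then yields reductions $\Omega_1 \reduces{\Gamma}{r} \Omega_1'$ and $\Omega_2 \reduces{\Gamma}{r} \Omega_2'$, and the remaining suffix is a derivation $\Omega_1'\Omega_2' \reduces{\Gamma}{r} \Omega$.
\end{description}

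The main obstacle is guaranteeing that the local prefix can actually be run to normal forms of each half. Two things must be checked. First, when a variable occurs twice inside $\Omega_1$, $\mathcal{D}$ might eliminate that duplicate only through an interaction with a copy in $\Omega_2$. In that case we must show that an equivalent contraction inside $\Omega_1$ is available. The idea is that any contraction direction usable across the boundary can be reproduced inside $\Omega_1$ using the same rule $\lcontr$ or $\rcontr$ on the nearer copy. Second, a weakening that was forbidden globally may become allowed in a half (where $x$ is absent), or a weakening done at the global level must be moved into a half. Weakening placements can always be attributed to one side of the boundary, and restricted weakening in a half remains valid because each half is a subset of $\Gamma$ at mode $\geq r$.

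I would first try to settle this by a measure argument on the number of cross-boundary steps, applying the permutation Lemma~\ref{lemma:contraction} repeatedly. Termination of this rewriting follows from Theorem~\ref{thm:termination}.
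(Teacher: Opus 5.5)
Your within-half case is fine. Your overall idea is the construction the paper asserts in one line: move the within-half work of the given derivation into $\Omega_1 \reduces{\Gamma}{r} \Omega_1'$ and $\Omega_2 \reduces{\Gamma}{r} \Omega_2'$, and leave the boundary-crossing work for $\Omega_1'\Omega_2' \reduces{\Gamma}{r} \Omega$. But the proposal has a gap exactly where the content of the lemma lies. In the global-step case you drop the induction and rely on reordering $\mathcal{D}$ so that local steps come first and cross-boundary steps last. None of the results you cite gives this. Lemma~\ref{lemma:contraction} only commutes a contraction with a following move of the same variable. Theorem~\ref{thm:complete-restriction} only says unrestricted weakening adds no normal forms. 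Neither lets you push a within-half step below a boundary-crossing one. Moreover, the local/cross distinction is not well defined by position once an occurrence has crossed. Suppose some $b$ from $\Omega_2$ moves left into the $\Omega_1$ region. A later move of an $\Omega_1$ element past a block containing $b$, a contraction of $b$ with a copy from $\Omega_1$, or a weakening placed between interleaved occurrences cannot be assigned to a half without tracking origins. The intermediate contexts also no longer split as a concatenation. To make this work you would have to tag occurrences by origin, prove a commutation lemma for every (cross step, local step) pair, and give a measure that decreases. Theorem~\ref{thm:termination} is about finiteness of $\NF_{\Gamma,r}(\Omega)$ and supplies no such measure.

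Second, even granting the reordering, the local prefix need not end in normal forms, which you flag but do not resolve. Take $\sigma(m)=\{\rcontr\}$, $\Omega_1=(x{:}A_m)(a{:}B_k)(x{:}A_m)$ and $\Omega_2=(x{:}A_m)$. The reduction to $(a{:}B_k)(x{:}A_m)$ can consist of two $\rcontr$ steps, each removing an $\Omega_1$ copy in favor of the $\Omega_2$ copy. Its local prefix is then empty, and $\Omega_1$ is left non-normal. Any repair must insert within-half contractions that do not occur in $\mathcal{D}$, and this changes which copy survives. Because copies may have been moved before being contracted, you then need an argument that some, generally different, suffix still reaches $\Omega$. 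That argument depends on which of $\lcontr,\rcontr,\lmove,\rmove$ (and $\ms{W}$, via the closure conditions) the mode has. Your remark about reproducing the contraction ``on the nearer copy'' is the right intuition, but it is not a proof. What is missing is either this commutation-and-repair lemma, or an invariant describing which survivor positions and relative orders are reachable, from which $\Omega_1'$ and $\Omega_2'$ can be read off directly.
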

\begin{proof}

  By constructing such $\Omega_1',\Omega_2'$ from the given derivations.
\end{proof}

We now state and prove the correctness theorems.

\begin{theorem}[Soundness]
\label{thm:sound-implicit}  
If $\Gamma \vdash C_r \Dashv \Xi$ then $\forall \Omega \in \Xi$. $\Gamma \vdash  C_r \dashv \Omega$
\end{theorem}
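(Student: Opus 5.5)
We prove the statement by induction on the derivation of $\Gamma \vdash C_r \Dashv \Xi$, with one case per rule of \autoref{fig:implicit-nd}. The key auxiliary fact, proved first by induction on the derivation of $\Omega \reduces{\Gamma}{r} \Omega'$, is:
\begin{quote}
If $\Gamma \vdash C_r \dashv \Omega$ and $\Omega \reduces{\Gamma}{r} \Omega'$, then $\Gamma \vdash C_r \dashv \Omega'$ in the explicit system.
\end{quote}
Each rule of $\reduces{\Gamma}{r}$ corresponds directly to an explicit structural rule of \autoref{sec:nd-explicit}, read in the opposite direction. The reduction $\mathsf{W}$ removes $x$, which is the explicit weakening read upward; its side conditions $x{:}A_m \in \Gamma$, $m \geq r$ and $\mathsf{W}\in\sigma(m)$ are exactly those of the explicit rule. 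The reduction $\lcontr$ is the explicit $\lcontr$, $\rcontr$ is $\rcontr$, $\lmove$ is $\lmove$, $\rmove$ is $\rmove$, and $\ms{id}$ is trivial. Since the explicit derivation is built on top of the given one, we get an explicit derivation ending in $\Omega'$. A reduction that runs from $\Omega_L(x{:}A_m)\Omega_R$ down to $\Omega_L\Omega_R$ is realised by explicit weakening in the forward direction. I would double-check the orientation convention carefully, since $\reduces{}{}$ reads its rules bottom-up. Consequently, every output of the form $\NF_{\Gamma,r}(\cdot)$ is sound as soon as the underlying contexts are.

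With this fact, the logical cases are routine.
\begin{itemize}
\item \textbf{hyp}: each $\Omega\in\NF(\{x{:}A_m\})$ comes from the explicit $\mathsf{hyp}$ rule followed by the auxiliary fact.
\item $\fuse I$, $\rightimp E$, $\leftimp E$, $1E$: an element of $\NF(\Xi_L\Xi_R)$ has the form $\Omega'$ with $\Omega_L\Omega_R\reduces{}{}\Omega'$, where $\Omega_L\in\Xi_L$ and $\Omega_R\in\Xi_R$. The IH gives explicit derivations for $\Omega_L$ and $\Omega_R$; the explicit rule gives $\Omega_L\Omega_R$, and the auxiliary fact finishes.
\item $\rightimp I$, $\leftimp I$: filtering only keeps contexts $\Omega(x{:}A_m)$ lying in $\Xi'$. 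The IH applies to these, and then the explicit introduction rule applies. Freshness $x\notin\Omega$ holds because elements of $\Xi'$ are in normal form, so $x$ occurs only once.
\item $\with I$: intersection directly yields the same $\Omega$ in both premises.
\item $\with E$, $\oplus I$: immediate.
\item $\up I$: the filter $\Xi\bbar_m$ supplies the side condition $\Omega\geq m$.
\item $\up E$, $\down I$: these need the mode bound for $\NF_{\Gamma,k}$ and $\NF_{\Gamma,m}$, which is consistent with part 1 of the substitution lemma.
\end{itemize}

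The main obstacle is the positive eliminations $\fuse E$, $\oplus E$ and $\down E$, whose outputs combine a split of a premise context with a filtered set. For $\fuse E$, take an element of $\NF(\Xi_L\Xi_M\Xi_R)$. It reduces from some $\Omega_L\Omega_M\Omega_R$ with $\Omega_M\in\Xi_M$ and $\Omega_L(x{:}A_m)(y{:}B_m)\Omega_R\in\Xi$. Normal form guarantees that $x$ and $y$ do not occur in $\Omega_L\Omega_R$, which is exactly the explicit side condition. We then apply the IH to both premises, use the explicit $\fuse E$ with $m\geq r$, and finish with the auxiliary fact.

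For $\oplus E$, the equation $\Xi_L\Xi_R=\Xi_L^A\Xi_R^A\cap\Xi_L^B\Xi_R^B$ has to be read as: a pair $(\Omega_L,\Omega_R)$ is retained only if it is a common split for both branches. I would make this reading precise so that the same $\Omega_L,\Omega_R$ surround $x{:}A_m$ in $\Xi_A$ and $x{:}B_m$ in $\Xi_B$; a mere equality of concatenations would not be enough. With that reading, the explicit $\oplus E$ applies directly. If the set notation is ambiguous, I would restate it as a set of pairs before running the induction.
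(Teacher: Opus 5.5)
Your proposal is correct and is essentially the paper's proof. You induct on the implicit derivation, apply the IH and the matching explicit rule to each representative context, and then replay the reduction $\Omega \reduces{\Gamma}{r} \Omega'$ with explicit structural rules, which is what the paper calls ``mimicking context reduction in rules''; your orientation is right, since each reduction rule read bottom-up is the corresponding explicit rule read top-down. You also avoid the paper's appeal to the Restricted Weakening theorem, since restricted weakening is an instance of explicit weakening, and your pairwise reading of the splits in $\fuse E$, $\down E$ and $\oplus E$ is the reading under which those rules are sound.
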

\begin{proof}
   By induction on the derivation in normal form.
\begin{description}
    \item[Case: $(\fuse I)$] 
      \begin{smallrules}
            \infer[\fuse I]{\Gamma 
      \vdash A_m \fuse B_m \Dashv \NF(\Xi_L\Xi_R)}{\Gamma 
      \vdash A_m \Dashv {\Xi_L} & \Gamma 
      \vdash B_m \Dashv {\Xi_R}}
    \end{smallrules}
    \begin{tabbing}
        $\forall \Omega_1 \in \Xi_1. \Gamma \vdash A_m \dashv \Omega_1$ \` by $IH(\mathcal{D})$\\
        $\forall \Omega_2 \in \Xi_2. \Gamma \vdash A_m \dashv \Omega_2$ \` by $IH(\mathcal{E})$\\
        $\forall \Omega_1\Omega_2 \in \Xi_1\Xi_2. \Gamma \vdash A_m \fuse B_m \dashv \Omega_1\Omega_2$ \` by $\fuse I$ \\
        $\Xi_1\Xi_2 \reduces{\Gamma}{m} \Xi$ \` by \autoref{thm:complete-restriction} \\ 
        $\forall \Omega \in \Xi. \Gamma \vdash A_m \fuse B_m \dashv \Omega$ \` by mimicking context reduction in rules
    \end{tabbing}
\end{description}
\end{proof}

\begin{theorem}[Completeness]
  \label{thm:complete-implicit}\mbox{}\newline
  If $\Gamma \vdash C_r \dashv \Omega$ then $\Gamma \vdash C_r \Dashv \Xi$ for a
  $\Xi$ with $\NF_{\Gamma,r}(\Omega) \subseteq \Xi$
\end{theorem}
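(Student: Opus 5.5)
We prove the statement by induction on the derivation of $\Gamma \vdash C_r \dashv \Omega$ in the explicit system. The invariant carried through the induction is that every normal form of the conclusion's output context lies in the computed set $\Xi$. Logical rules and structural rules are handled separately.

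\emph{Structural rules.} Suppose the last rule is one of $\mathsf{W}$, $\lcontr$, $\rcontr$, $\lmove$, $\rmove$, taking $\Omega'$ in the premise to $\Omega$ in the conclusion. The implicit derivation is simply the one given by the induction hypothesis for the premise, with the same $\Xi$. It suffices to show $\NF_{\Gamma,r}(\Omega) \subseteq \NF_{\Gamma,r}(\Omega')$, that is, that any reduction $\Omega \reduces{\Gamma}{r} \Omega_{\nf}$ can be preceded by the inverse step $\Omega' \to \Omega$.
\begin{itemize}
\item For contraction and mobility, this is immediate: the explicit rule read bottom-up coincides with the corresponding context-reduction rule.
\item For weakening, the explicit rule adds $x$ to $\Omega'$. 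The context-reduction rule $\mathsf{W}$ removes an occurrence when read top-down, which gives $\Omega' \reduces{}{} \Omega_{\nf}$ from $\Omega \reduces{}{} \Omega_{\nf}$. This step needs the freshness side condition. When $x$ already occurs in $\Omega'$, we appeal to \autoref{thm:complete-restriction}, which states that unrestricted weakening yields the same normal forms.
\end{itemize}

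\emph{Logical rules.} Take $\fuse I$ as the representative case. By the induction hypotheses, $\NF(\Omega_L) \subseteq \Xi_L$ and $\NF(\Omega_R) \subseteq \Xi_R$. For $\Omega_{\nf} \in \NF(\Omega_L\Omega_R)$, \autoref{lemma:join-ok} gives $\Omega_L'\Omega_R'$ with $\Omega_L' \in \NF(\Omega_L)$, $\Omega_R' \in \NF(\Omega_R)$, and $\Omega_L'\Omega_R' \reduces{}{} \Omega_{\nf}$. Hence $\Omega_{\nf} \in \NF(\Xi_L\Xi_R)$.

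One check is needed here. The induction hypotheses are taken at mode $m$ for both premises, and the conclusion is also at mode $m$, so the normal-form modes agree. Where premises live at different modes ($\fuse E$, $\down E$, $1E$, and the other elimination rules), we use that normal forms at mode $m \geq r$ are also admissible at mode $r$. This requires a small monotonicity lemma: $\NF_{\Gamma,m}(\Omega) \subseteq \NF_{\Gamma,r}(\Omega)$ when $m \geq r$, since weakening at $r$ is the only step that depends on the mode. The remaining cases go as follows.
\begin{itemize}
\item $\oplus I$, $\with E$, $\up E$: these just transport the inclusion.
\item $\up I$: filtering by $\bbar_m$ is harmless because $\Omega \geq m$ in the explicit premise.
\item $\with I$: the intersection contains $\NF(\Omega)$ because both hypotheses contain it.
\end{itemize}

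\emph{Binder rules; the main obstacle.} The hardest cases are $\rightimp I$, $\leftimp I$, and the positive eliminations $\fuse E$, $\down E$, $\oplus E$. In these, the premise context has the shape $\Omega(x{:}A_m)$ or $\Omega_L(x{:}A_m)\Omega_R$ with $x$ fresh, and we must show that every normal form of $\Omega$ (resp.\ of the combined conclusion context) arises by filtering premise normal forms at $x$.

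The plan is a decomposition lemma analogous to \autoref{lemma:join-ok}. Since $x$ occurs exactly once and is never weakened or contracted away in the explicit derivation (it is used linearly at its binding position), a normal form $\Omega_{\nf}$ of $\Omega$ extends to a normal form $\Omega_{\nf}(x{:}A_m)$ of $\Omega(x{:}A_m)$. We obtain this by replaying the same reduction steps, none of which touch $x$. The extended form then survives the filter $\Xi'\bbar\Xi(x{:}A_m)$.

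For $\fuse E$, the conclusion's normal forms are normal forms of $\Omega_L\Omega_M\Omega_R$. We split them via \autoref{lemma:join-ok} applied twice, and we place $\Omega_L',\Omega_R'$ around $x,y$ in a premise normal form, again by replaying the reductions.

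For $\oplus E$, we additionally need the same split $\Omega_L,\Omega_R$ to appear in both branches. This holds because the explicit rule uses a common $\Omega_L,\Omega_R$, so the pair $(\Omega_L',\Omega_R')$ lies in the intersection.

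I expect the subtle point to be that reductions of the premise context might move other variables across $x$. In that case the filtered $\Xi_L,\Xi_R$ would not correspond to a normal form of $\Omega_L$ and $\Omega_R$ individually. I would handle this by only using the replayed reductions, which treat $\Omega_L$ and $\Omega_R$ independently, rather than arbitrary premise normal forms. Completeness needs only the existence of suitable elements in $\Xi$, so this restriction suffices.
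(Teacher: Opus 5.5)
Your overall plan and your $\fuse I$ case match the paper's proof, which only spells out $\fuse I$. Two of the steps you add for the remaining cases fail as stated.

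\textbf{The monotonicity lemma points the wrong way.} This affects $\fuse E$, $\down E$, $\oplus E$ and $1E$. It also affects $\up E$ and $\down I$, which you treat as mere transport even though their conclusions normalize at a lower mode. In these cases the induction hypothesis gives only $\NF_{\Gamma,m}(\Omega_M) \subseteq \Xi_M$. The Concatenation lemma at mode $r$, however, hands you some piece $\Omega_M' \in \NF_{\Gamma,r}(\Omega_M)$. Knowing $\NF_{\Gamma,m}(\Omega_M) \subseteq \NF_{\Gamma,r}(\Omega_M)$ does not put $\Omega_M'$ into $\Xi_M$, and it need not be there. $\Omega_M'$ may contain a weakened variable of a mode $k \geq r$ with $k \not\geq m$, while every context output by the implicit system at mode $m$ is $\geq m$.

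The missing idea is postponement of weakening. A restricted weakening adds a single fresh copy that can never be contracted. Erasing that copy invalidates no contraction or mobility step, since both are non-local. So every reduction can be reordered to weaken last. The pieces can then be chosen weakening-free, hence mode-independent, so they lie in $\NF_{\Gamma,m}(\Omega_M) \subseteq \Xi_M$. The outer $\NF_{\Gamma,r}$ in the conclusion re-adds the weakened variables. In short, what you need is $\NF_{\Gamma,r}(\Omega) \subseteq \NF_{\Gamma,r}(\NF_{\Gamma,m}(\Omega))$ for $\Omega \geq m \geq r$. Your inclusion is the right tool only for $\up I$, where the conclusion sits at the higher mode.

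\textbf{Independent replay does not yield a premise normal form in the positive eliminations.} Suppose $\Omega_L$ and $\Omega_R$ share a variable $z$. This is legal whenever $z$'s mode admits contraction, e.g.\ when the premise output $z\,x\,y\,z$ comes from a $\fuse I$. Then every normal form of $\Omega_L$ and of $\Omega_R$ contains $z$, so $\Omega_L'\,(x{:}A_m)\,(y{:}B_m)\,\Omega_R'$ has a duplicate. Since $\Xi$ contains only normal forms, the induction hypothesis says nothing about it.

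You must additionally contract $z$ across $x\,y$. The direction has to be chosen according to which copy survives in the target normal form. You then need to check that the same non-local contraction, performed across $\Omega_M'$ instead, still reaches the target. The choice matters. Take $z$ with $\lcontr$ and $\rcontr$ but no mobility, and $\Omega_M' = b$ immobile. The normal form $b\,z$ of $z\,b\,z$ arises only from the premise normal form $x\,y\,z$, not from $z\,x\,y$.

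The same issue governs how $1E$ splits normal forms of $\Omega_L\Omega_R$. In $\oplus E$, making identical contraction choices in both branches is what yields the common split you want. Your replay argument for $\rightimp I$ and $\leftimp I$ is fine, because there only one context surrounds the fresh variable.
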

\begin{proof}
  By induction on the derivation, with some applications of \autoref{lemma:join-ok}.
    \begin{description}
    \item[Case:]  $\fuse I$
      \begin{rules}
        \infer[\fuse I]{\Gamma \vdash  A_m \fuse B_m \dashv \Omega_1\Omega_2}{\deduce{\Gamma \vdash A_m \dashv \Omega_1}{\mathcal{D}} & \deduce{\Gamma \vdash B_m \dashv \Omega_2}{\mathcal{E}}}
      \end{rules}
      \begin{tabbing}
        $\Gamma \vdash A_m \Dashv \Xi_1$ and $\NF(\Omega_1) \subseteq \Xi_1$ \` by $IH(\mathcal{D})$\\
        $\Gamma \vdash B_m \Dashv \Xi_2$ and $\NF(\Omega_2) \subseteq \Xi_2$ \` by $IH(\mathcal{E})$\\
        $\Gamma \vdash A_m \fuse B_m \Dashv \Xi$ where $\Xi = \NF(\Xi_1\, \Xi_2)$ \` by (implicit) $\fuse I$\\
        $\forall \Omega \in \NF(\Omega_1\Omega_2).\, \exists \Omega_1'\Omega_2' \in \NF(\Omega_1)\NF(\Omega_2)\ s.t.\ \Omega_1'\Omega_2' \reduces{\Gamma}{m} \Omega$\` by \autoref{lemma:join-ok}\\
        $\forall \Omega \in \NF(\Omega_1\Omega_2).\, \Omega \in \Xi$ \` from above \\
        $\NF(\Omega_1\Omega_2) \subseteq \Xi$ \` by definition
      \end{tabbing}
    \end{description}
\end{proof}

\section{Extending to a Type System}
\label{sec:types}
It is a mostly mechanical process to extend the natural deduction logic system to a bidirectional type system as per \cite{Dunfield22surveys} for a functional programming language. There is one decision to make however. 
Ordered logic has two kinds of implication and two kinds of pairs. The choice we must make here is whether these types should have separate proof terms as well or whether the set of proof terms should remain the same as in the unordered setting found in \cite{Jang24fscd}.
We make the latter choice. We do so to maximize the potential for mode polymorphism in a programming language with this type system. Left and right implication collapse to the un-ordered implication in the presence of full mobility and fuse and twist collapse to the unordered pair.
By leaving $\lambda x.e$ as the proof term for both kinds of implication and $(e_1,e_1)$ as the proof term for both kinds of pairs, programs themselves remain mode agnostic.
For these reasons, we keep the term language from \cite{Jang24fscd}.
We walk through a few of the rules to show the mechanics of constructing a bidirectional system for an ordered type system. We leave the full exposition of such a language for future work as a programming language based on this logic will become more practical once an efficient algorithm is developed.
In a bidirectional type system, there are two judgements: checking written as $e \chk A_m$, in which we view the type as an input, and synthesis $s \syn A_m$ in which we view the type as an output. Introduction rules check the type of the principal type constructor, while elimination rules synthesize it.

We begin with the rules for fuse:
\begin{smallrules}
  \infer[\fuse I]{\Gamma \vdash (e_1,e_2) \chk A_m \fuse B_m \Dashv \NF(\Xi_L\Xi_R)}{\Gamma \vdash e_1 \chk A_m \Dashv \Xi_L & \Gamma \vdash e_2 \chk B_m \Dashv \Xi_R}\\[1em]
  \infer[\fuse E]
  {\Gamma \vdash \textbf{match}\ s\ \textbf{with} (x,y) \Rightarrow e \chk C_r \Dashv \NF(\Xi_L\Xi_M\Xi_R)}
  {\Gamma \vdash s \syn A_m \fuse B_m \Dashv \Xi_M & m \geq r
    & \begin{array}[b]{c}
        \Gamma,x:A_m,y:B_m \vdash e \chk C_r \Dashv \Xi \\
        \Xi{\bbar}\Xi_L(x:A_m)(y:B_m)\Xi_R
      \end{array}}
\end{smallrules}
Then for twist:
\begin{smallrules}
  \infer[\twist I]{\Gamma \vdash (e_1,e_2) \chk A_m \twist B_m \Dashv \NF(\Xi_L\Xi_R)}{\Gamma \vdash e_1 \chk A_m \Dashv \Xi_R & \Gamma \vdash e_2 \chk B_m \Dashv \Xi_L}\\[1em]
  \infer[\twist E]
  {\Gamma \vdash \textbf{match}\ s\ \textbf{with} (x,y) \Rightarrow e \chk C_r \Dashv \NF(\Xi_L\Xi_M\Xi_R)}
  {\Gamma \vdash s \syn A_m \twist B_m \Dashv \Xi_M & m \geq r
    & \begin{array}[b]{c}
        \Gamma,x:A_m,y:B_m \vdash e \chk C_r \Dashv \Xi \\
        \Xi{\bbar}\Xi_L(y:B_m)(x:A_m)\Xi_R
      \end{array}}
\end{smallrules}
Then for right implication: 
\begin{smallrules}
  \infer[\rightimp I]{\Gamma \vdash \lambda x.e \chk A_m \rightimp B_m \Dashv \Xi}{\Gamma,x:A_m \vdash e \chk B_m \Dashv \Xi' & \Xi'{\bbar}\Xi(x:A_m)}\\[1em]
  \infer[\rightimp E]{\Gamma \vdash s\ e \syn B_m \Dashv \NF(\Xi_L\Xi_R)}{\Gamma \vdash s \syn A_m \rightimp B_m \Dashv \Xi_L & \Gamma \vdash e \chk A_m \Dashv \Xi_R}
\end{smallrules}
And finally for left implication:
\begin{smallrules}
  \infer[\leftimp I]{\Gamma \vdash \lambda x.e \chk A_m \rightimp B_m \Dashv \Xi}{\Gamma,x:A_m \vdash e \chk B_m \Dashv \Xi' & \Xi'{\bbar}(x:A_m)\Xi}\\[1em]
  \infer[\leftimp E]{\Gamma \vdash s\ e \syn B_m \Dashv \NF(\Xi_L\Xi_R)}{\Gamma \vdash s \syn A_m \leftimp B_m \Dashv \Xi_R & \Gamma \vdash e \chk A_m \Dashv \Xi_L}
\end{smallrules}

As a small example, consider the expression
\[\lambda x.\lambda y. (x,y)\]
This has four types which would type check if the mode $m$ does not admit any sort of mobility:
\[
  \begin{array}{l}
    A_m \rightimp B_m \rightimp A_m \fuse B_m \\
    A_m \leftimp B_m \leftimp A_m \twist B_m \\
    A_m \rightimp B_m \leftimp A_m \twist B_m \\
    A_m \leftimp B_m \rightimp A_m \fuse B_m
  \end{array}
\]
And more if any kind of mobility is admissible. However, because the proof term remains the same, 
we can just indicate which kind of implication we mean by specifying a type at a top level, and not worry 
about also adjusting the proof term if we want the program to be in an unordered setting.

\section{Discussion and Further Related Work}
\label{sec:conclusion}

The closest related work is that of Kanovich et al.\
\cite{Kanovich19mscs}, who develop a mixed-mode ordered
logic using subexponentials. We build on this by providing a sequent calculus
with modes uniformly governing structural properties, as in
\cite{Pruiksma18un,Jang24fscd}, proving cut elimination for this system, and
establishing decidability for a natural deduction system with implicit
structural rules that corresponds to the initial sequent calculus.  The
subexponential system of \cite{Kanovich18ijcar} embeds into our framework via
$!^m A_{\mmode{O}} \equiv \down^m_{\mmode{O}} \up_{\mmode{O}}^m A_{\mmode{O}}$
where ${\mmode{O}}$ is a base mode that admits no structural properties and
$m \geq {\mmode{O}}$.
 
Directional mobility has been used to model security properties
\cite{Gouni26popl} in a semantic rendering of ordered adjoint types in the
QRTT language. Our algorithm for computing normal forms can be used to decide
\emph{subsumption}, which leads to decidability for a bidirectional
type-checker for QRTT\@.  We cannot fully develop this here, just provide a
small example.  We want to ensure that a high security operation can only be
performed after authorization.  For this purpose, we introduce three modes:
$\ms{low}$, $\ms{auth}$, and $\ms{high}$.  A variable $x{:}1_{\ms{low}}$
represents a low security operation, a variable $x{:}1_{\ms{auth}}$ represents a
(successful) authorization, and $x{:}1_{\ms{high}}$ represents a high security
operation.  These modes have the following structural properties: $\ms{auth}$ is
left mobile (authorization can always occur earlier), $\ms{high}$ is right
mobile (we can always perform a high security operation later), and $\ms{low}$
is mobile in both directions (we can always perform a low security operation).
Then sequences such as
$(x_1{:}1_{\ms{low}})\, (x_2{:}1_{\ms{auth}})\, (x_3{:}1_{\ms{high}})
\,(x_4{:}1_{\ms{low}})$ are allowed, but we cannot move $x_2$ to the right of
$x_3$, or $x_3$ to the left of $x_2$.

Future work includes developing and evaluating an efficient type-checking algorithm. Although
\autoref{sec:nd-implicit} yields a decision procedure, this approach may
be of exponential complexity due to unnecessary weakenings and variable
placements. Possible optimizations include delaying weakening until a variable
exits scope or multiple contexts need to match (e.g., in $\with I$). We could
also track fully mobile propositions separately, as they could occur
anywhere. 

\medskip
\noindent\textbf{Acknowledgments.} This research was sponsored through generous donations from the Jane Street Corporation and a Litton Fellowship in Computer Science. We would also like to thank Klaas Pruiksma for
discussions related to ordered adjoint logic.

\bibliography{fp-copy}

\end{document}